\newcommand{\full}[2]{#2}
\begin{document}

\title{Constraint-based Relational Verification}


\author{Hiroshi Unno\inst{1,2} \and Tachio Terauchi\inst{3} \and Eric Koskinen\inst{4}}
\institute{University of Tsukuba, Ibaraki, Japan \and RIKEN AIP, Tokyo, Japan \and Waseda University, Tokyo, Japan \and Stevens Institute of Technology, New Jersey, USA}
%
%

\maketitle

\vspace{-5mm}

\begin{abstract}
In recent years they have been numerous works that aim to automate relational verification. Meanwhile,  although Constrained Horn Clauses (\CHCS{}) empower a wide range of verification techniques and tools, they lack the ability to express hyperproperties beyond $k$-safety such as generalized non-interference and co-termination.

This paper describes a novel and fully automated constraint-based approach to relational verification. We first introduce 
a new class of predicate Constraint Satisfaction Problems called \PCSPWFFN{} where constraints are represented as clauses modulo first-order theories over predicate variables of three kinds: ordinary, well-founded, or functional. This generalization over \CHCS{}  permits arbitrary (i.e., possibly non-Horn) clauses, well-foundedness constraints,  functionality constraints, and is capable of expressing these relational verification problems.
Our approach enables us to express and automatically verify problem instances that require non-trivial (i.e., non-sequential and non-lock-step) self-composition by automatically inferring appropriate {\em schedulers} (or {\em alignment}) that dictate when and which program copies move.
%
%
To solve problems in this new language,  we present a constraint solving method for \PCSPWFFN{} based on \emph{stratified} CounterExample-Guided Inductive Synthesis (CEGIS) of ordinary, well-founded, and functional predicates.  

We have implemented the proposed framework and obtained promising results on diverse relational verification problems that are beyond the scope of the previous verification frameworks.
\end{abstract}

\keywords{relational verification, constraint solving, CEGIS}


\section{Introduction}
\label{sec:intro}
We describe a novel constraint-based approach to automatically solving a wide range of relational verification problems including
$k$-safety, co-termination~\cite{DBLP:journals/jlp/Beringer10,Barthe2020}, termination-sensitive non-interference (TS-NI)~\cite{DBLP:conf/csfw/VolpanoS97}, and generalized non-interference (GNI)~\cite{DBLP:conf/sp/McCullough88} for infinite-state programs.

A key challenge in relational property verification is the discovery of {\em relational invariants} which relate the states of multiple program executions.  However, whereas most prior approaches must fix the execution {\em schedule}\footnote{The notion of {\em schedule} is also often called an {\em alignment} in literature.} (e.g., lock-step or sequential)~\cite{Barthe2004,Darvas2005,Terauchi2005,Unno2006,DBLP:conf/esorics/Naumann06,DBLP:journals/toplas/EilersMH20},
a recent work by Shemer et al.~\cite{Shemer2019} has proposed a method to automatically infer
sufficient {\em fair} schedulers to prove the goal relational property.  Importantly, the schedulers in their approach can be {\em semantic} in which the choice of which program to execute can depend on the {\em states} of the programs as opposed to the classic {\em syntactic} schedulers such as lock-step and sequential that can only depend on the control locations.  However, their approach requires the user to provide appropriate atomic predicates and is not fully automatic.  Moreover, they only support $k$-safety properties.
A recent work has proposed a method for automatically  verifying non-hypersafety relational properties but only for {\em finite} state systems~\cite{DBLP:conf/cav/CoenenFST19}.

Meanwhile, today's constraint-based frameworks are also insufficient at automating relational verification. 
The class of predicate constraints called Constrained Horn Clauses (\CHCS{})~\cite{Bjorner2015a} has been widely adopted as a ``common intermediate language'' for uniformly expressing verification problems for various programming paradigms, such as functional and object-oriented languages.  Example uses of the \CHCS{} framework include safety property verification~\cite{Grebenshchikov2012,Gurfinkel2015,Kahsai2016} and refinement type inference~\cite{Unno2009,Terauchi2010,Kobayashi2011b,Jhala2011,Zhu2015}. 
%
The separation of constraint generation and solving has facilitated the rapid development of constraint generation tools such as \rcaml~\cite{Unno2009}, \seahorn~\cite{Gurfinkel2015}, and \jayhorn~\cite{Kahsai2016} as well as efficient constraint solving tools such as \spacer~\cite{Komuravelli2014}, \eldarica~\cite{Hojjat2018}, and \hoice~\cite{Champion2018}.
Unfortunately, \CHCS{} lack the ingredients to sufficiently express these relational verification problems.

In this paper we introduce automated support for relational verification by generalizing \CHCS{} and introducing a new class of predicate Constraint Satisfaction Problems called \PCSPWFFN{}. This language allows constraints that are {\em arbitrary (i.e., possibly non-Horn)} clauses modulo first-order theories over predicate variables that can be {\em functional predicates}, {\em well-founded predicates} or ordinary predicates.  We then show that, thanks to the enhanced predicate variables, \PCSPWFFN{} can express {\em non-hypersafety} relational properties such as co-termination~\cite{DBLP:journals/afp/BeringerH08}, termination-sensitive non-interference (TS-NI)~\cite{DBLP:conf/csfw/VolpanoS97}, and generalized non-interference (GNI)~\cite{DBLP:conf/sp/McCullough88}.  
In addition, our approach effectively quantifies over the schedule, expressing {\em arbitrary fair semantic scheduling} thanks to non-Horn clauses and functional predicates (functional predicates are needed to express fairness in the presence of non-termination which is needed for properties like co-termination and TS-GNI).  The flexibility allows our approach to automatically discover a fair semantic schedule and verify difficult relational problem instances that require non-trivial schedules.  We prove
that our encodings are {\em sound} and {\em complete}. Expressing relational invariants with such flexible scheduling is not possible with \CHCS{}.
However, \PCSPWFFN{} retains a key benefit of \CHCS{}: the idea of separating constraint generation from solving.

We next present a novel constraint solving method for \PCSPWFFN{} based on \emph{stratified} CounterExample-Guided Inductive Synthesis (CEGIS) of ordinary, well-founded, and functional predicates.  In our method, ordinary predicates represent relational inductive invariants, well-founded predicates witness synchronous termination, and functional predicates represent Skolem functions witnessing existential quantifiers that encode angelic non-determinism.
These witnesses for a relational property are often mutually dependent and involve many variables in a complicated way (see \full{Appendix~\ref{app:sol_doublesquareni}, \ref{app:sol_coterm}, and \ref{app:sol_tsgni}}{the extended report~\cite{Unno2021b}} for examples).  The synthesis thus needs to use expressive templates without compromising the efficiency.
Stratified CEGIS combines CEGIS~\cite{Solar-Lezama2006} with stratified families of templates~\cite{Terauchi2015} (\ie, decomposing templates into a series of increasingly expressive templates) to achieve completeness in the sense of~\cite{Jhala2006,Terauchi2015}, a theoretical guarantee of convergence, and a faster and stable convergence by avoiding the overfitting problem of expressive templates to counterexamples~\cite{Padhi2019}.
The constraint solving method naturally generalizes a number of previous techniques developed for \CHCS{} solving and invariant/ranking function synthesis, addressing the challenges due to the generality of \PCSPWFFN{} that is essential for relational verification.

We have implemented the above framework and have applied our tool \pcsat{} to a diverse collection of 20 relational verification problems and obtained promising results.  The benchmark problems go beyond the capabilities of the existing related tools (such as \CHCS{} solvers and program verification tools).  \pcsat{} has solved 15 problems fully automatically by synthesizing complex witnesses for relational properties, and for the 5 problems that could not be solved fully automatically within the time limit, \pcsat{} was able to solve them semi-automatically provided that a part of an invariant is manually given as a hint.


\section{Overview}
\label{sec:overview}
\subsection{Relational verification problems}

\subsubsection{$k$-safety}
\label{sec:overviewksafety}
Consider the following program taken from \cite{Shemer2019} that uses a summation to calculate the square of \texttt{x}, and then doubles it.
\begin{center}
\begin{verbatim}
doubleSquare(bool h, int x) {
   int z, y=0;
   if (h) { z = 2*x; } else { z = x; }
   while (z>0) { z--; y = y+x; }
   if (!h) { y = 2*y; }
   return  y;
}
\end{verbatim}
\end{center}
This program also takes another input  \hh\ and, if the value of \hh\ is true, calculates the result differently. The classical relational property \emph{termination-insensitive non-interference} (TI-NI) says that, roughly, an observer cannot infer the value of high security variables (\hh\ in this case) by observing the outputs (\yy). This is a {\em 2-safety property}~\cite{Terauchi2005,Clarkson2008}: it relates two executions of the same program. In this example, we ask whether two executions that initially agree on \xx\ (\ie, $\xx_1= \xx_2$) will agree on the resulting \yy\ (\ie, $\yy_1 = \yy_2$). The subscripts in these relations indicate copies of the program: $\xx_1$ is variable \xx\ in the first copy of the program and $\xx_2$ is variable \xx\ in the second copy.
More generally, $k$-safety means that if the initial states of a $k$-tuple of programs satisfy a pre-relation $\pre$, then when they all terminate the $k$-tuple of post states will satisfy post-relation $\post$.

The literature proposes many ways to reason about $k$-safety including methods of reducing a multi-program problem to a single-program problem, such as through self-composition~\cite{Barthe2004,Terauchi2005,Unno2006}, product programs~\cite{Barthe2011}, and their variants~\cite{DBLP:journals/toplas/EilersMH20,Sousa2016,Shemer2019,Unno2017b,Pick2018}. Their key challenge is that of \emph{scheduling}: how to interleave the programs' executions so that invariants in the combined program are able to effectively describe cross-program relationships.  Indeed, as proved by \cite{Shemer2019}, verifying this example with the na\"{i}ve lock-step scheduling is impossible with only linear arithmetic invariants while linear arithmetic invariants suffice with a more ``semantic'' scheduling that schedules the copy with $\hh_1 = \false$ to iterate the loop twice per each iteration of the loop in the copy with $\hh_2 = \true$.

In this paper, we will describe a way to pose the scheduling problem as a part of a series of constraints so that the search for an effective scheduler is relegated to the solver level.  In our approach, a $k$-safety verification problem is encoded as a set of constraints containing (ordinary) predicate variables that represent the scheduler to be discovered and a relational invariant preserved by the scheduler.  Specially, we introduce a predicate variable $\textsf{inv}$ that represents a relational invariant and for each $A \subseteq \{1,\ldots,k\}$, a predicate variable $\textsf{sch}_A(\ve{\progvar}_1,\ldots,\ve{\progvar}_k)$ where $\ve{\progvar}_i$ are the variables of the $i$th program, and add constraints that say that if the predicate is $\true$, then the programs whose index are in $A$ will step forward while the rest remain still and also $\textsf{inv}$ is preserved by the step.  For soundness, it is important to constrain the scheduler to be {\em fair}, \ie, at least one program that can progress must be scheduled to progress if there is a program that can progress.  As we shall show in Sec.~\ref{sec:apps}, non-Horn clauses are essential to expressing the fairness constraint.  Roughly, the idea is to use a clause with multiple positive predicate variables (\ie, {\em head disjunction}) to say ``{\em if the relational invariant holds, then at least one of the unfinished programs must be scheduled to progress}.''

Our approach is similar to and is inspired by the approach of~\cite{Shemer2019} that also infers a fair semantic scheduler.  However, their approach requires the user to provide sufficient atomic predicates manually and is not fully automated.  By contrast, our approach soundly-and-completely encodes the $k$-safety verification problem together with scheduler inference as a set of constraints thanks to the expressiveness of \PCSPWFFN{}, and automatically solves those constraints by the stratified CEGIS algorithm (cf.~Sec.~\ref{sec:related} for further comparison).


\subsubsection{Co-termination}
\label{sec:overviewcoterm}

Now consider the following pair of programs.
\[\begin{array}{lll}
 \Pcota &:& \texttt{ while (x>0) \{ x = x - y; \} }\\
 \Pcotb &:&\texttt{ while (x>0) \{ x = x - 2 $\times$ y; \} }
\end{array}\]
A (non-safety) relational question is whether these programs $\Pcota$ and $\Pcotb$ agree on termination~\cite{DBLP:journals/jlp/Beringer10,Barthe2020}. In general they do not: if, for example, $\Pcota$ is executed with $\xx<0$ and $\Pcotb$ with $\xx>0 \wedge \yy=0$, the first will terminate while the second will diverge. However, under the pre-relation $\pre \equiv \xx_1=\xx_2 \wedge \yy_1 = \yy_2$, they will \emph{agree} on termination: the first program terminates iff the second one does.  The property falls outside of the $k$-safety fragment as it cannot be refuted by finite execution traces.  It is worth noting that {\em termination-sensitive non-interference} (TS-NI) is the conjunction of TI-NI and co-termination of two copies of the same target program with $\pre$ equating the copies' low inputs.

Proving co-termination, like $k$-safety, can be aided by scheduler and we can again use our constraints over predicate variables. But this is not enough. We need additional constraints to ensure that whenever one of the two has terminated, the other is also guaranteed to terminate. To address this, we next introduce \emph{well-founded predicate variables}. These predicate variables will appear in our generalized language of constraints  as terms of the form 
$\wfr(\ve{\progvar}_i,\ve{\progvar}_i')$, where the relation $\wfr$ must be \emph{discovered} by the constraint solving method. (In Sec.~\ref{sec:csolve} we describe how to achieve this through our stratified CEGIS algorithm.)   For the above example, our stratified CEGIS algorithm and our tool \pcsat\ automatically discovers (1) a schedule where the two programs step together when $x_1>0$ and $x_2>0$, (2) a relational invariant that implies that if the first program is terminated, then either the second program is terminated or $y_2 \geq 1$ (and vice-versa), and (3) well-founded relations that (combined with the relational invariant) witness that if the loop has terminated in the second program ($x_2 \leq 0$) but not in the first ($x_1 > 0$), then a transition in the first is well-founded (and vice-versa).
%
In Sec.~\ref{sec:apps}, we show how co-termination problems can be soundly-and-completely encoded in \PCSPWFFN{}.

\newcommand{\nd}{*}
\newcommand{\ndi}{*^i}
\newcommand{\ndO}{*^1}

\subsubsection{Generalized non-interference.}
\label{sec:overviewgni}

Now consider the following program.
\begin{center}
\begin{Verbatim}[commandchars=\\\{\},codes={\catcode`$=3\catcode`^=7\catcode`_=8}]
gniEx(bool high, int low) \{
  if (high) \{
    int x = \nd{}$^\text{int}$; if (x >= low) \{ return x; \} else \{ while (true) \{\} \}
  \} else \{
    int x = low; while ($\nd^\text{bool}$) \{ x++; \} return x;
  \}
\}
\end{Verbatim}
\end{center}
The $\nd{}^\texttt{int}$ (resp.~$\nd{}^\texttt{bool}$) above indicates an integer (resp.~a binary) non-deterministic choice.  {\em Termination-insensitive generalized non-interference} (TI-GNI)~\cite{DBLP:conf/sp/McCullough88} is an extension of non-interference to non-deterministic programs, and it says that for any two copies of the program with possibly different values for the high security input (${\tt high}$ in this example) and with the same value for the low security input (${\tt low}$ in this example), if one copy has a terminating execution that ends in some output (the final value of $\xx$ in this example), then the other copy has either a terminating execution ending in the same output or a non-terminating execution.   The {\em termination-sensitive} variant (TS-GNI) strengthens the condition by asserting that if one copy has a terminating execution then the other copy has a terminating execution that ends in the same output.  Both GNI variants are $\forall\exists$ hyperproperties and fall outside of the $k$-safety fragment.

Verifying GNI requires handling non-determinism.  Note that non-determinism occurs both {\em demonically} (\ie, as $\forall$) and {\em angelically} (\ie, as $\exists$) in GNI.  While handling demonic non-determinism is straightforward in a constraint-based verification since the term variables are implicitly universally quantified, handling angelic non-determinism is less straightforward.

Our approach handles finitary angelic non-determinism like $\nd{}^\texttt{bool}$ by adding non-Horn clauses with head disjunctions that roughly express the condition ``{\em the relational invariant remains true in one of the finitely many next step choices}''.  To handle infinitary non-determinism like $\nd{}^\texttt{int}$, we introduce {\em functional predicate variables} denoted $\textsf{f}(\ve{\progvar},r)$. In these terms, \textsf{f} is a predicate variable to be discovered but with a new wrinkle: this predicate involves a return value $r$  and the interpretation of \textsf{f} is a \emph{total function} over $\ve{\progvar}$.  For this example, we introduce the term $\textsf{f}(\ve{\progvar},r)$ where $r$ represents the value chosen non-deterministically at $\nd{}^\texttt{int}$ and $\ve{\progvar}$ are program variables and {\em prophecy variables} that represent the final return values of the demonic copy.  For this example, \pcsat{} automatically discovers the predicate $r = \texttt{ret}_1$ where $\texttt{ret}_1$ is the prophecy variable for the return value of the demonic copy.  With it, \pcsat{} is able to verify TS-GNI and TI-GNI of this example.
We remark that functional predicates are also used to encode scheduler fairness in the presence of non-termination and is needed to ensure soundness for properties like co-termination and TS-GNI.  In Sec.~\ref{sec:appsgni}, we show how TI-GNI and TS-GNI can be soundly-and-completely encoded in \PCSPWFFN{}.

%

\begin{figure*}[t]
\includegraphics[scale=0.45]{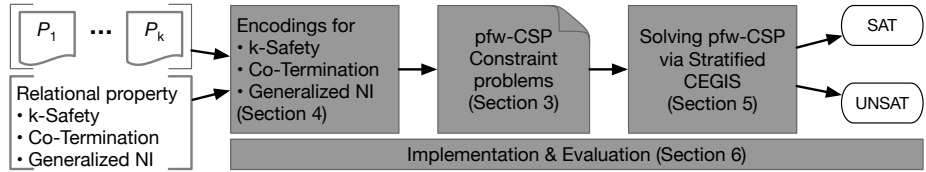}
\caption{\label{fig:diagram} Overview of the contributions and how they achieve a constraint-based strategy for relational verification.}
\end{figure*}

\subsection{Challenges \& Contributions}

There are several challenges that we face in supporting relational verification problems with a constraint-based approach. The subsequent sections of this paper are organized around addressing those challenges as follows: 

\newcommand\mycontrib[2]{\item}

\begin{itemize}
\mycontrib{sec:pcsp}{Generalizing Constraints} 
We first ask how to generalize the constraint language to go beyond CHCs to express a more general class of relational verification problems. To this end,
in Sec.~\ref{sec:pcsp}, we present a new language called \emph{predicated constraint satisfaction problems} (\PCSPWFFN), which incorporate non-Horn clauses, (ordinary) predicate variables, well-founded predicate variables, and functional predicate variables.

\mycontrib{sec:apps}{Encoding Relational Verification} We next return to the above relational verification problems --$k$-safety, co-termination, and generalized non-interference-- and describe how \PCSPWFFN{} can express each of them in a sound and complete manner in Sec.~\ref{sec:apps}.

\mycontrib{sec:csolve}{Solving Constraints} The next major contribution of our research is a novel \emph{stratified} CEGIS algorithm for solving \PCSPWFFN{} constraints.  Our approach integrates advanced verification techniques: {\em stratified family of templates}~\cite{Terauchi2015} and {\em CEGIS of invariants/ranking functions}~\cite{Garg2014,Gonnord2015,Padhi2016,Champion2018}.  While the individual ideas have been proposed previously, they have only been designed for less expressive frameworks such as CHCs, and substantial extensions are needed to combine and apply them to the new \PCSPWFFN{} framework as we shall show in Sec.~\ref{sec:csolve}.

\mycontrib{sec:eval}{Tools} We next turn to an implementation and experimental validation on a diverse collection of 20 relational verification problems, consisting of $k$-safety problems from Shemer \emph{et al.}~\cite{Shemer2019} and new co-termination and GNI problems in Sec.~\ref{sec:eval}.  As far as we know, none of the existing \emph{automated} tools other than our new tool called \pcsat{} can solve them.

\end{itemize}

\noindent
In sum, Fig.~\ref{fig:diagram} depicts each of these sections and how, together, they enable relational verification.  \full{For space, the proofs of the soundness and completeness theorems are deferred to the appendix.}{For space, extra materials are deferred to the extended report~\cite{Unno2021b}.}


\section{Predicate Constraint Satisfaction Problems \PCSPWFFN{}}
\label{sec:pcsp}

As discussed in Sec.~\ref{sec:overview}, \CHCS{} are insufficient to express important relational verification problems.  In the section we introduced a generalized language of constraints called \PCSPWFFN{}. The language of constraint satisfaction problems (CSP) permits non-Horn clauses, {\bf p}redicate variable terms, including those for {\bf f}unctional predicates and {\bf w}ell-founded relations (pfw). 
We now define \PCSPWFFN{}.

Let $\theory$ be a (possibly many-sorted) first-order theory with the signature $\Sigma$.  The syntax of $\theory$-formulas and $\theory$-terms is:
\begin{align*}
  \text{(formulas)} \;
  \phi &::= X(\myseq{t}) \mid p(\myseq{t}) 
  \mid \neg \phi
  \mid \phi_1 \lor \phi_2
  \mid \phi_1 \land \phi_2 \\
  \text{(terms)} \;
  t &::= x \mid f(\myseq{t})
\end{align*}
Here, the meta-variables $x$ and $X$ respectively range over term and predicate variables.
%
%
The meta-variables $p$ and $f$ respectively denote predicate and function symbols of $\Sigma$.  We use $s$ as a meta-variable ranging over sorts of the signature $\Sigma$.  We write $\propos$ for the sort of propositions and $s_1 \to s_2$ for the sort of functions from $s_1$ to $s_2$.  We write $\arity{o}$ and $\sort{o}$ respectively for the arity and the sort of a syntactic element $o$.  A function $f$ represents a constant if $\arity{f}=0$.  We write $\ftv{\phi}$ and $\fpv{\phi}$ respectively for the set of free term and predicate variables that occur in $\phi$.  We write $\myseq{x}$ for a sequence of term variables, $\length{\myseq{x}}$ for the length of $\myseq{x}$, and $\epsilon$ for the empty sequence.  We often abbreviate $\lnot \phi_1 \lor \phi_2$ as $\phi_1 \imply \phi_2$.  We henceforth consider only well-sorted formulas and terms.  We use $\varphi$ as a meta-variable ranging over $\theory$-formulas without predicate variables.

We now define a \PCSP{} $\mathcal{C}$ (with ordinary but without well-founded and functional predicate variables) to be a finite set of clauses of the form
\begin{equation}
\label{eq:pcsp}
\textstyle
\varphi \lor \left( \bigvee_{i=1}^{\ell} X_i(\myseq{t}_i) \right) \lor \left( \bigvee_{i=\ell+1}^m \neg X_i(\myseq{t}_i) \right)
\end{equation}
where $0 \leq \ell \leq m$.
We write $\ftv{c}$ for the set of free term variables of a clause $c$.  The set of free term variables of $\clauses$ is defined by $\ftv{\clauses} = \bigcup_{c \in \clauses} \ftv{c}$.  We regard the variables in $\ftv{c}$ as implicitly universally quantified.  We write $\fpv{\clauses}$ for the set of free predicate variables that occur in $\clauses$.
A \emph{predicate substitution} $\sigma$ is a finite map from predicate variables $X$ to closed predicates of the form $\lambda x_1,\dots,x_{\arity{X}}.\varphi$.  We write $\sigma(\clauses)$ for the application of $\sigma$ to $\clauses$ and $\dom{\sigma}$ for the domain of $\sigma$.  We call $\sigma$ a \emph{syntactic solution} for $\clauses$ if $\fpv{\clauses} \subseteq \dom{\sigma}$ and $\models \bigwedge \sigma(\clauses)$.  Similarly, we call a predicate interpretation $\rho$ a \emph{semantic solution} for $\clauses$ if $\fpv{\clauses} \subseteq \dom{\rho}$ and $\rho \models \bigwedge \clauses$.

\begin{remark}
The language \PCSP{} generalizes over existing languages of constraints.
\CHCS{} can be obtained as a restriction of \PCSP{} where $\ell \leq 1$ in (\ref{eq:pcsp}) for all clauses.  We can also define \COCHCS{} as \PCSP{} but with the restriction that $m \leq \ell+1$ for all clauses.  A linear \CHCS{} is a \PCSP{} that is both \CHCS{} and \COCHCS{}.
\end{remark}

We next extend \PCSP{} to \PCSPWFFN{} by adding well-foundedness and function-ness constraints.  A \PCSPWFFN{} $(\clauses,\kind)$ consists of
\begin{itemize}
    \item a finite set $\clauses$ of \PCSP{}-clauses over predicate variables and
    \item a kinding function $\kind$ that maps each predicate variable $X \in \fpv{\clauses}$ to its kind: any one of $\KORD$, $\KWF$, or $\KFN$ which respectively represent ordinary, well-founded, and functional predicate variables.
\end{itemize}
We write $\rho \models \WF{X}$ if the interpretation $\rho(X)$ of the predicate variable $X$ is \emph{well-founded}, that is, $\sort{X}=(\myseq{s},\myseq{s}) \to \propos$ for some $\myseq{s}$ and there is no infinite sequence $\myseq{v}_1,\myseq{v}_2,\dots$ of sequences $\myseq{v}_i$ of values
of the sorts $\myseq{s}$ such that $(\myseq{v}_i,\myseq{v}_{i+1}) \in \rho(X)$ for all $i\geq 1$.
We write $\rho \models \FN{X}$ if $X$ is {\em functional}, that is, $\sort{X}=(\myseq{s},s) \to \propos$ for some $\myseq{s}$ and $s$, and $\rho \models \allC{\myseq{x}}{\myseq{s}}{(\exiC{y}{s}{X(\myseq{x},y)}) \land \allC{y_1,y_2}{s}{(X(\myseq{x},y_1) \land X(\myseq{x},y_2) \imply y_1=y_2)}}$ holds.
We call a predicate interpretation $\rho$ a \emph{semantic solution} for $(\clauses,\kind)$ if $\rho$ is a semantic solution of $\clauses$, $\rho \models \mathit{WF}(X)$ for all $X$ such that $\kind(X)=\KWF$, and $\rho \models \mathit{FN}(X)$ for all $X$ such that $\kind(X)=\KFN$.
The notion of syntactic solution can be similarly generalized to \PCSPWFFN{}.

\begin{definition}[Satisfiability of \PCSPWFFN{}]
\label{def:psat}
\normalfont
The predicate satisfiability problem of a \PCSPWFFN{} $(\clauses,\kind)$ is 
that of deciding whether it has a semantic solution.
\end{definition}

\begin{remark}
Recall that we assume that the $\theory$-formulas $\varphi$ in \PCSP{} clauses do not contain quantifiers.  The assumption, however, is not a restriction for \PCSPWFFN{} because we can Skolemize quantifiers using functional predicates.

\end{remark}


\section{Relational Verification with Constraints}
\label{sec:apps}

We now present reductions from relational verification problems to \PCSPWFFN{}, thus enabling a new route to automation of these problems. We begin with $k$-safety, and then move toward liveness and non-determinism, which are thorny problems in the relational setting.
We first provide some basic definitions and notations.

\paragraph{Programs.}
We consider programs $P_1$,\dots,$P_k$ on variables
$\ve{\progvar_1}$,\dots,$\ve{\progvar_k}$, respectively.  A {\em state} of the
program $P_i$ is a valuation of the variables $\ve{\progvar_i}$.  We represent
such a valuation by a sequence of values $\ve{v}$ such that $\length{\ve{v}} = \length{\ve{\progvar_i}}$.
We assume that each $P_i$ is defined by the predicate
$T_i(\ve{\progvar_i},\ve{\progvar_i}')$ denoting its one-step transition relation
i.e., $T_i(\ve{v},\ve{v}')$ implies that evaluating $P_i$ one step
from the state $\ve{v}$ reaches the state $\ve{v}'$.  We also assume
that there is a predicate $F_i(\ve{\progvar_i})$ that represents the final
states of the program such that $F_i(\ve{v})$ and
$T_i(\ve{v},\ve{v}')$ implies $\ve{v} = \ve{v}'$, i.e., the program
self-loops when it reaches a final state.  We say that a state
$\ve{v}$ (multi-step) reaches a final state $\ve{v}'$ in the
evaluation of $P_i$, written $\ve{v} \reach_i \ve{v}'$, if there
exists a non-empty finite sequence of states $\pi$ such that $\pi[1]
= \ve{v}$, $\pi[\length{\pi}] = \ve{v}'$, $T_i(\pi[j-1],\pi[j])$ for
all $1 < j \leq \length{\pi}$, and $F_i(\ve{v}')$.  We
write $\ve{v} \reach_i \bot$ if there exists a non-terminating evaluation
from $\ve{v}$ in $P_i$, \ie, if there exists an infinite sequence of
states $\varpi$ such that $\varpi[1] = \ve{v}$, $T_i(\varpi[j-1],\varpi[j])$ for all $1 < j$, and $\neg F_i(\varpi[j])$ for all $0 < j$.  We note that
a program may be non-deterministic, that is, $T_i(\ve{v},\ve{v}')$ and $T_i(\ve{v},\ve{v}'')$ may both be true for some $\ve{v}' \neq \ve{v}''$.

\subsection{$k$-Safety}
\label{sec:appsksafe}


A {\em $k$-safety property} is
given by predicates $\pre(\ve{\progvar})$ and $\post(\ve{\progvar})$ that respectively denote the pre and the post relations across the $k$-tuple.
\begin{definition}[$k$-safety]
\normalfont
  The {\em $k$-safety property verification problem} is to decide if the
following holds:
\[
  \begin{array}{l}
    \forall \ve{v} = \ve{v_1},\dots,\ve{v_k}.\forall \ve{v}' = \ve{v_1}',\dots,\ve{v_k}'.
        \pre(\ve{v}) \wedge \bigwedge_{i\in [k]} \ve{v_i} \reach_i\ve{v_i}' \Rightarrow \post(\ve{v}')
  \end{array}
\]
\end{definition}
\noindent
That is, any $k$-tuple of final states reachable from a $k$-tuple of
states satisfying the precondition satisfies the post condition.  For
instance, the TI-NI verification from Sec.~\ref{sec:overviewksafety} is a
$2$-safety property where $P_1$ and $P_2$ are copies of the same
program, $\pre$ states that the low inputs of the two programs are
equal (\ie, $\xx_1 = \xx_2$ in the example), and $\post$ states that
the low outputs of the two programs are equal (\ie, $\yy_1 = \yy_2$ in
the example).

We now describe a new way to pose the $k$-safety relational verification problem via constraints written in \PCSPWFFN{}. We write $[k]$ for the set $\{1,\dots,k
\}$.  We define $\nepset{[k]} = \{ S \subseteq [k] \mid S \neq \emptyset\}$.  Let $\ve{\progvar} = \ve{\progvar_1}$,\dots,$\ve{\progvar_k}$ be a $k$-tuple of vectors, corresponding to the variables of the $k$ programs. 
\begin{definition}[$k$-safety through constraints]
\label{def:ksafe_encode}
\normalfont
We define \PCSPWFFN{} constraints $\mathcal{C}_\mathrm{S}$ be the set of
following clauses:
\begin{enumerate}
\item[(1)] $\pre(\ve{\progvar}) \Rightarrow \inv(\ve{\progvar})$
\item[(2)] $\inv(\ve{\progvar}) \wedge \bigwedge_{i \in [k]} F_i(\ve{\progvar_i}) \Rightarrow \post(\ve{\progvar})$
\item[(3)] For each $A \in \nepset{[k]}$,\mbox{}\\
  \hspace*{1em}\(
      \inv(\ve{\progvar}) \wedge \sch_A(\ve{\progvar}) \wedge \bigwedge_{i \in A} T_i(\ve{\progvar_i},\ve{\progvar_i}') \wedge \bigwedge_{i \in [k]\setminus A} \ve{\progvar_i} = \ve{\progvar_i}' \Rightarrow \inv(\ve{\progvar}') \)
\item[(4)] For each $A \in \nepset{[k]}$, $\inv(\ve{\progvar}) \wedge \sch_A(\ve{\progvar}) \wedge \bigvee_{i \in [k]} \neg F_i(\ve{\progvar_i}) \Rightarrow \bigvee_{i \in A} \neg F_i(\ve{\progvar_i})$
\item[(5)] $\inv(\ve{\progvar}) \wedge \bigvee_{i \in [k]} \neg F_i(\ve{\progvar_i}) \Rightarrow \bigvee_{A \in \nepset{[k]}} \sch_A(\ve{\progvar})$.
\end{enumerate}
\end{definition}
Here, $\inv$ and $\sch_A$ (for each $A \in \nepset{[k]}$) are ordinary predicate variables.
Roughly, the predicate variables $\sch_A$ describe a {\em scheduler}.
The scheduler stipulates that when $\sch_A(\ve{v_1},\dots,\ve{v_k})$
is true, each $P_i$ such that $i \in A$ takes a step from the state
$\ve{v_i}$ while the others remain still.  Note that the scheduler is
{\em semantic} in the sense that which programs are scheduled to be
executed next can depend on the current states of the programs.
Clauses (1)-(3) assert that $\inv$ is an invariant sufficient to prove
the given safety property with the scheduler defined by $\sch_A$'s.
Clauses (4) say that if an $\inv$-satisfying state is such that the
processes in $A$ are allowed to move and some program has not yet
terminated, then at least one process in $A$ has not yet terminated.
Clause (5) says that any state satisfying $\inv$ has to satisfy some
$\sch_A$.  Clauses (4) and (5) ensure the {\em fairness} of the
scheduler, that is, at least one unfinished program is scheduled to
make progress if there is an unfinished program.

\begin{theorem}[Soundness and Completeness of $\mathcal{C}_\mathrm{S}$]
  \label{thm:ksafety}
  The given $k$-tuple of programs satisfies the given $k$-safety property iff
  $\mathcal{C}_\mathrm{S}$ is satisfiable.
\end{theorem}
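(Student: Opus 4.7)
The plan is to prove both directions of the biconditional separately: (soundness) any semantic solution of $\mathcal{C}_\mathrm{S}$ witnesses the $k$-safety property, and (completeness) the $k$-safety assumption lets us construct a semantic solution. The fairness clauses (4) and (5) are what make soundness work, while completeness uses a natural semantic scheduler that moves exactly the set of not-yet-finished programs.

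For soundness, assume $\rho$ is a semantic solution and fix $\ve{v}$ with $\pre(\ve{v})$ together with terminating traces $\pi_i$ of each $P_i$ from $\ve{v_i}$ to some $\ve{v_i}'$ with $F_i(\ve{v_i}')$ (such traces exist by the definition of $\ve{v_i} \reach_i \ve{v_i}'$). I would construct a scheduled combined execution indexed by pointer tuples $(j_1,\dots,j_k)$ into the $\pi_i$'s, starting at $(1,\dots,1)$ where $\inv$ holds by clause (1). At any tuple with some $j_i < \length{\pi_i}$, clause (5) supplies some $A$ with $\sch_A$ true at the current combined state, and clause (4) guarantees at least one $i\in A$ with $\neg F_i(\pi_i[j_i])$, hence with $j_i < \length{\pi_i}$. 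I advance the pointers in $A$, incrementing those with $j_i < \length{\pi_i}$ and leaving those already at $\length{\pi_i}$ fixed via the $F_i$ self-loop; this is a valid $T_i$-step for each $i\in A$ and an equality for $i\notin A$, so clause (3) preserves $\inv$. The measure $\sum_i (\length{\pi_i} - j_i)$ strictly decreases by clause (4), so after finitely many steps we reach the combined final state $(\ve{v_1}',\dots,\ve{v_k}')$ with $\inv$ still true and all $F_i$ satisfied, and clause (2) yields $\post(\ve{v}')$.

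For completeness, assume $k$-safety and let $R_i(\ve{x},\ve{y})$ denote zero-or-more-step $T_i$-reachability in $P_i$. Define the interpretation $\rho$ by $\rho(\inv)(\ve{v}) \iff \exists\ve{v_0}.\, \pre(\ve{v_0}) \wedge \bigwedge_i R_i(\ve{v_{0,i}},\ve{v_i})$ and $\rho(\sch_A)(\ve{v}) \iff A = \{i\mid \neg F_i(\ve{v_i})\}$. Clause (1) is immediate by choosing $\ve{v_0} = \ve{v}$. Clause (3) holds because a scheduled step extends the reachability of components in $A$ by one $T_i$-transition while leaving the others unchanged, preserving the componentwise reachability from some $\pre$-state. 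Clause (2) reduces to $k$-safety: if $\inv(\ve{v})$ and each $F_i(\ve{v_i})$ holds, the witnessing $\ve{v_0}$ satisfies $\ve{v_{0,i}} \reach_i \ve{v_i}$ for every $i$, so $\post(\ve{v})$. Clauses (4) and (5) hold by construction since our $\sch_A$ is true precisely at the unfinished set.

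The main obstacle lies in the soundness argument, specifically in formalizing the interleaved execution: the scheduler may select an $A$ that contains some already-finished programs, so care is needed to show both that the resulting combined transition matches the premise of clause (3) (using the $F_i$ self-loop property) and that the step still makes strict progress in the termination measure (which relies crucially on clause (4)). A minor caveat is that one should take each $\pi_i$ to be minimal (no intermediate state is final); the self-loop property of final states makes this benign in any case.
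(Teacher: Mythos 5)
Your proof is correct, and the soundness direction is essentially the paper's argument: the same pointer-advancing induction driven by clauses (5), (4), (3) in turn, with the measure $\sum_i(\length{\pi_i}-j_i)$ making termination explicit. The only packaging difference there is that you argue directly (any solution forces $\post$ at the end of the walk) whereas the paper argues the contrapositive (a violating trace tuple invalidates the post-condition clause); the caveats you flag --- that scheduled-but-finished components must be advanced via the final-state self-loop as a genuine $T_i$-step, and that traces should be taken minimal so that $\neg F_i(\pi_i[j_i])$ coincides with $j_i<\length{\pi_i}$ --- are exactly the points the paper's proof also leans on. Where you genuinely diverge is the completeness direction. The paper takes the lock-step witness: $\sch_{[k]}\equiv\true$, all other $\sch_A\equiv\false$, and $\inv$ equal to the set of tuples reachable by lock-step evaluation from $\pre$-states; this makes the fairness clauses (4) and (5) trivially satisfied and pushes the work into checking clause (2) for lock-step-reachable final tuples. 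You instead set $\sch_A(\ve{v})$ true exactly when $A$ is the set of unfinished components, and take $\inv$ to be the full asynchronous-product reachability relation ($\exists\ve{v_0}$ with $\pre(\ve{v_0})$ and each component independently reachable). Your invariant is a superset of the paper's, but it aligns verbatim with the definition of $k$-safety, so clause (2) is immediate, and clause (3) holds for every $A$ without even using the scheduler hypothesis; the small price is a (routine) case check that your semantic scheduler satisfies (4) and (5). Both witnesses are valid; the paper's is the minimal-effort one, yours is the one that makes the connection to the $k$-safety definition most transparent.
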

We note that the soundness direction crucially relies on scheduler fairness.
The completeness is with respect to semantic solutions (cf. Def.~\ref{def:psat}) and it is only ``relative'' with respect to syntactic solutions: a syntactic solution only exists when the predicates of the background theory are able to express sufficient invariants and schedulers (impossible in general for any decidable theory when the class of programs is Turing-powerful as in our case when the background theory of predicates is QFLIA).

It is important to note that $\mathcal{C}_\mathrm{S}$ is {\em not} \CHCS{} because clause (5) has a head disjunction.
$\mathcal{C}_\mathrm{S}$ may be seen as a constraint-based formulation of the approach proposed in \cite{Shemer2019}.  However, their approach requires the user to provide sufficient predicates manually and is not fully automated, while our approach can fully automatically solve the problems by constraint solving (cf.~Sec.~\ref{sec:csolve}).

\begin{example}
\label{ex:ksafety}
The formalization allows flexible scheduling.  For instance, for the
TI-NI example from Sec.~\ref{sec:overviewksafety}, our approach is able
to infer the predicate substitution that maps $\sch_{\{1\}}$, $\sch_{\{2\}}$, and $\sch_{\{1,2\}}$ to
$\lambda\ve{\progvar}.\hh_1 \wedge \neg \hh_2 \wedge \zz_1+1 = 2\zz_2$,
$\lambda\ve{\progvar}.\neg \hh_1 \wedge \hh_2 \wedge \zz_2+1 = 2\zz_1$, and
$\lambda\ve{\progvar}.(\hh_1 \wedge \neg \hh_2 \Rightarrow \zz_1 = 2\zz_2) \wedge (\neg \hh_1 \wedge \hh_2 \wedge \zz_2 = 2\zz_1)$ respectively, where $\ve{\progvar}$ is the list
of the variables in the two program copies.  The inferred predicates stipulate that
the copy with $\hh = \true$ is scheduled to execute the loop two times per every loop iteration of the copy with $\hh = \false$.
\full{Appendix~\ref{app:doublesquareni}}{The extended report~\cite{Unno2021b}} shows the \PCSPWFFN{} encoding of the example.
A solution generated by \pcsat{} is also shown in \full{Appendix~\ref{app:sol_doublesquareni}.}{\cite{Unno2021b}.}
\end{example}

\subsection{Co-termination}
\label{sec:appscoterm}

Intuitively, co-termination means that if one program terminates, then a second program must terminate~\cite{DBLP:journals/jlp/Beringer10,Barthe2020}. This can also be thought of as a form of relational \emph{termination problem}.\footnote{The property has also been called {\em relative termination}~\cite{DBLP:conf/cade/HawblitzelKLR13}.}
\begin{definition}[Co-Termination]
\normalfont
The {\em co-termination verification problem} is to decide if for all $\ve{v_1},\ve{v_2}$ such that $\pre(\ve{v_1},\ve{v_2})$, if $\ve{v_1} \reach_1 \ve{v_1'}$ then $\ve{v_2} \not\reach_2 \bot$.
\end{definition}
Roughly, the property says that from any pair of states related by $\pre$, if $P_1$ terminates, then $P_2$ must also terminate.
Note that this is an {\em asymmetric} property.  A symmetric version can be obtained by also asserting the property with the positions of the two programs exchanged.  The symmetric version implies, assuming that there is at least one execution from any $\pre$-related state, that from any pair of $\pre$-related states, all executions from one state terminates iff all executions from the other one do as well.
We now present an encoding of conditional co-termination in \PCSPWFFN{}.


\newcommand\Xs{\ve{\progvar}}
\newcommand\twoschTF{\sch_\TF}
\newcommand\twoschTT{\sch_\TT}
\newcommand\twoschFT{\sch_\FT}
\newcommand\twoschp[1]{\textsf{sch}_{#1}}
\begin{definition}[Co-termination through constraints]
\normalfont
\label{def:coterm_encode}
Let $\ve{\progvar} = \ve{\progvar_1}, \ve{\progvar_2}$.
We define \PCSPWFFN{} constraints $\mathcal{C}_\mathrm{CoT}$ be the set of
following clauses:
\begin{enumerate}
\item[(1)] $\pre(\Xs) \wedge \fnbnd(\Xs,b) \Rightarrow \inv(0,b,\Xs)$

\item[(2)] $\inv(d,b,\Xs) \wedge \neg F_1(\ve{\progvar_1}) \wedge \neg F_2(\ve{\progvar_2}) \Rightarrow (-b \leq d \wedge d \leq b \wedge b \geq 0)$
  
\item[(3a)] $\inv(d,b,\Xs) \wedge \twoschFT(d,b,\Xs) \wedge T_2(\ve{\progvar_2},\ve{\progvar_2}') \wedge (F_1(\ve{\progvar_1}) \vee F_2(\ve{\progvar_2}) \vee d' = d-1) \Rightarrow \inv(d',b,\ve{\progvar_1},\ve{\progvar_2}')$

\item[(3b)] $\inv(d,b,\Xs) \wedge \twoschTF(d,b,\Xs) \wedge T_1(\ve{\progvar_1},\ve{\progvar_1}') \wedge (F_1(\ve{\progvar_1}) \vee F_2(\ve{\progvar_2}) \vee d' = d+1) \Rightarrow \inv(d',b,\ve{\progvar_1}',\ve{\progvar_2})$

\item[(3c)]$\inv(d,b,\Xs) \wedge \twoschTT(d,b,\Xs) \wedge T_1(\ve{\progvar_1},\ve{\progvar_1}') \wedge T_2(\ve{\progvar_2},\ve{\progvar_2}') 
\Rightarrow \inv(d,b,\ve{\progvar_1}',\ve{\progvar_2}')$

\item[(4a)] $\inv(d,b,\Xs) \wedge \twoschFT(d,b,\Xs) \wedge \neg F_1(\ve{\progvar_1}) \Rightarrow \neg F_2(\ve{\progvar_2})$

\item[(4b)]$\inv(d,b,\Xs) \wedge \twoschTF(d,b,\Xs) \wedge \neg F_2(\ve{\progvar_2}) \Rightarrow \neg F_1(\ve{\progvar_1})$

\item[(5)]$\inv(d,b,\Xs) \wedge (\neg F_1(\ve{\progvar_1}) \vee \neg F_2(\ve{\progvar_2})) \Rightarrow \bigvee_{a\in \{\TT,\FT,\TF \}} \twoschp{a}(d,b,\Xs)$

\item[(6)] $\inv(d,b,\Xs) \wedge F_1(\ve{\progvar_1}) \wedge \neg F_2(\ve{\progvar_2}) \wedge T_2(\ve{\progvar_2},\ve{\progvar_2}') \Rightarrow \wfr(\ve{\progvar_2},\ve{\progvar_2}')$
\end{enumerate}
\end{definition}
Here, $\sch_\TT$, $\sch_\FT$, and $\sch_\TF$ are 2-specialization of the $k$-safety scheduler of Def.~\ref{def:ksafe_encode}.  Clauses (3x)'s are similar to (3) of Def.~\ref{def:ksafe_encode} and assert that $\inv$ is an invariant under the scheduler.  Clauses (4x)'s and (5), like (4) and (5) of Def.~\ref{def:ksafe_encode}, are used to ensure the scheduler fairness.  However, they are insufficient for co-termination as a non-terminating copy can be scheduled indefinitely leaving the other copy unscheduled.  Clauses (1) and (2) are added to amend the issue.  In (1), $\fnbnd$ is a functional predicate variable that is used to select a {\em bound} $b$, and (2) asserts that the {\em difference} $d$ between the numbers of steps taken by the two copies is within $b$ in any state in $\inv$ when neither copy has terminated.  Note that $d$ is initialized to $0$ by (1) and properly updated in (3x)'s.  Finally, by using the well-founded predicate variable $\wfr$, (6) asserts that if $P_1$ has terminated, then so must eventually $P_2$.

\begin{theorem}[Soundness and Completeness of $\mathcal{C}_\mathrm{CoT}$]
\label{thm:coterm}
The given pair of programs co-terminate iff $\mathcal{C}_\mathrm{CoT}$ is satisfiable.
\end{theorem}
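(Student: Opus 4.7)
The plan is to establish both directions of the biconditional. The scheduler predicates $\sch_\mathrm{TT},\sch_\mathrm{TF},\sch_\mathrm{FT}$ interleave the two programs, $d$ tracks the signed difference in steps taken so far, $b$ bounds that difference, and $\wfr$ witnesses that $P_2$ cannot run forever once $P_1$ has halted.

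For soundness, I assume a semantic solution $\rho$ for $\mathcal{C}_\mathrm{CoT}$, fix $\pre$-related states $(\ve{v_1},\ve{v_2})$ with a terminating $P_1$-trace $\pi_1$ of length $n$, and suppose for contradiction an infinite $P_2$-trace $\pi_2$ from $\ve{v_2}$. Let $b := \rho(\fnbnd)(\ve{v_1},\ve{v_2})$, single-valued because $\fnbnd$ is functional. I build an interleaved trajectory over configurations $(d_k,b,\ve{u_{i_k}},\ve{w_{j_k}})$ starting at $(0,b,\ve{v_1},\ve{v_2})$, where at each step some label $A$ with $\sch_A$ holding is picked (clause (5) ensures existence whenever at least one program is non-final, and clauses (4a), (4b) prevent scheduling a terminated copy in isolation) and the $\pi_1$- or $\pi_2$-index advances accordingly. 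Clauses (1) and (3a)--(3c) preserve $\inv$, and clause (2) then enforces $|d_k|\le b$ whenever both programs are non-final. The key step is that $P_1$ must eventually become final in this trajectory: otherwise $\sch_\mathrm{FT}$ would be chosen cofinitely often (since $\mathrm{TT}$ and $\mathrm{TF}$ both advance the $\pi_1$-index, which is bounded by $n$), yet each such step decreases $d_k$ by one, eventually violating clause (2). Once $P_1$ is final, clause (4b) forbids $\sch_\mathrm{TF}$, so every subsequent scheduler choice advances $\pi_2$; clause (6) then places each such $T_2$-step in $\wfr$, producing an infinite $\wfr$-descending chain and contradicting well-foundedness of $\rho(\wfr)$.

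For completeness, I exhibit a concrete solution using the lock-step scheduler: $\sch_\mathrm{TT}$ holds iff neither program is final, $\sch_\mathrm{FT}$ iff $F_1\wedge\neg F_2$, and $\sch_\mathrm{TF}$ iff $F_2\wedge\neg F_1$; set $\fnbnd(\Xs):=0$; let $\inv(d,b,\Xs)$ hold iff $b=0$ and $\Xs$ is reachable from some $\pre$-initial pair under this scheduler, with $d$ pinned to $0$ while both programs are non-final and left unconstrained once either becomes final (needed to absorb the ``free $d'$'' alternatives in (3a) and (3b)); and let $\wfr$ be $T_2$ restricted to $P_2$-states that admit no infinite $T_2$-trace. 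Clauses (1), (2), (3a)--(3c), (4a), (4b), and (5) follow by construction. For clause (6), any reachable configuration with $F_1(\ve{v_1})\wedge\neg F_2(\ve{v_2})$ comes from a history exhibiting $\ve{v_{1,0}}\reach_1\ve{v_1}$ for the $\pre$-initial pair $(\ve{v_{1,0}},\ve{v_{2,0}})$, so co-termination forbids an infinite $P_2$-trace from $\ve{v_{2,0}}$ and hence from $\ve{v_2}$, putting every $T_2$-successor of $\ve{v_2}$ in $\wfr$. Well-foundedness of $\wfr$ is immediate, as any infinite $\wfr$-chain would be an infinite $T_2$-chain from a state whose membership in $\wfr$ precludes this.

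The main obstacle will be the soundness direction, specifically coordinating the interleaved trajectory against a possibly adversarial scheduler that may depend on the full state $(d,b,\Xs)$: one must show that no scheduler choice can forever starve $P_1$ while $\pi_2$ continues. The combination of the bound $b$ from $\fnbnd$ with the step-count balance $d$ is what forces eventual $P_1$-completion, after which the well-founded-relation argument via clause (6) becomes routine. Nondeterminism in $P_1$ and $P_2$ is absorbed by the universal quantification implicit in (3a)--(3c) and (6), so it requires no separate treatment.
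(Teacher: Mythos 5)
Your proof follows essentially the same route as the paper's: for soundness you trace an interleaving of a terminating $P_1$-trace against a hypothetical infinite $P_2$-trace, use the functional bound $b$ together with clause (2) to force the $P_1$-component to reach a final state (since otherwise $\sch_\FT$ is chosen cofinitely often and drives $d$ below $-b$), and then use clause (6) to extract an infinite descending $\wfr$-chain; for completeness you exhibit a lock-step-style scheduler with $\fnbnd$ returning $0$, the reachable-state invariant with $d$ pinned to $0$ while both copies are non-final, and $\wfr$ as $T_2$ restricted to states visited after $P_1$ has terminated. The one slip is your definition of $\wfr$ as ``$T_2$ restricted to states admitting no infinite $T_2$-trace'': because final states self-loop, every state from which $P_2$ can reach a final state does admit an infinite $T_2$-trace, so read literally your relation is too small for clause (6), and read as ``no infinite trace of non-final states'' it contains final-state self-loops and is not well-founded; you must additionally require $\neg F_2$ of the source state, which is exactly the guard already present in clause (6). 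The paper's own construction $(R\times R)\cap\{(\ve{v},\ve{v}')\mid T_2(\ve{v},\ve{v}')\}$ needs the same restriction and hedges with ``or any other well-founded relation witnessing the termination of $R$,'' so this is a shared, easily patched imprecision rather than a flaw in your overall argument.
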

As with Theorem~\ref{thm:ksafety}, the soundness direction relies on scheduler fairness.

\begin{example}
\label{ex:coterm}
Via the encoding, our \pcsat{} tool is able to verify the
symmetric co-termination example from Sec.~\ref{sec:overviewcoterm} by
automatically inferring the solution described there.  For space, the
concrete constraint set and solution are given in \full{Appendix~\ref{app:coterm} and \ref{app:sol_coterm}.}{the extended report~\cite{Unno2021b}.}
\end{example}

\subsection{Generalized Non-Interference}
\label{sec:appsgni}

We now turn to another relational property that cannot simply be captured by $k$-safety or co-termination.  So-called
\emph{termination-insensitive (resp.~-sensitive) generalized non-interference} (resp.~TI-GNI, TS-GNI) are $\forall\exists$ hyperproperties: from any pre-related pair of states whenever one side can take a move to a post state, there must be a way for the other side to also move to a post state such that the post-relation holds. 
As remarked in Sec.~\ref{sec:overview}, verifying GNI requires reasoning about both {\em demonic} (\ie, for all) and {\em angelic} (\ie, exists) {\em non-determinism}.

\begin{definition}[TI/TS-GNI]
\normalfont
The {\em GNI verification problem} is to decide if the following holds.
If $\pre(\ve{v_1},\ve{v_2})$ and $\ve{v_1} \reach_1 \ve{v_1}'$ then
\textbf{(TI-GNI)} $(\exists \ve{v_2}'.\ve{v_2} \reach_2 \ve{v_2}' \wedge \post(\ve{v_1}',\ve{v_2}')) \vee \ve{v_2} \reach_2 \bot$; or \textbf{(TS-GNI)} $\exists \ve{v_2}'.\ve{v_2} \reach_2 \ve{v_2}' \wedge \post(\ve{v_1}',\ve{v_2}')$.
%
\end{definition}
Note that our definition is parameterized by $\pre$ and $\post$.  The standard GNI definitions~\cite{DBLP:conf/sp/McCullough88} can be obtained by letting $P_1$ and $P_2$ be copies of the same target program and letting $\pre$ be the predicate equating the low inputs of the copies and $\post$ be the predicate equating the low outputs of the copies.

To formalize the \PCSPWFFN{} encodings of the GNI verification
problems, we define a relation $U_2$ to be one such that
$T_2(\ve{v},\ve{v}') \Leftrightarrow \exists r. U_2(r,\ve{v},\ve{v}')$
and $U_2(r,\ve{v},\ve{v}') \wedge
U_2(r,\ve{v},\ve{v}'') \Rightarrow \ve{v}' = \ve{v}''$.  
Roughly,
$U_2$ is a function version of the transition relation $T_2$ with the
extra parameter $r$ to make the non-deterministic choices
explicit.

We now show the \PCSPWFFN{} encodings of TI-GNI and TS-GNI.  The key
idea is to augment the encodings for $k$-safety and/or co-termination
with {\em functional predicate variables} and {\em prophecy
variables} that respectively represent the non-deterministic choices of
the angelic side (\ie, $P_2$) and the final outputs of the demonic side (\ie, $P_1$).
\begin{definition}[TI-GNI through constraints]
\label{def:tigni_encode}
\normalfont
We define \PCSPWFFN{} constraints
$\mathcal{C}_\mathrm{TIGNI}$ as $\mathcal{C}_\mathrm{S}$ in
Def.~\ref{def:ksafe_encode} for $k=2$ but with the following modifications:
\begin{itemize}
\item[(m1)] The parameters representing the inputs and outputs of $P_1$ is extended with prophecy variables $\ve{\pfv}$ where $\length{\ve{\pfv}} = \length{\ve{\progvar_1}}$.  Accordingly, each occurrence of $\ve{\progvar_1}$ is replaced by $\ve{\pfv},\ve{\progvar_1}$, and each occurrence of $\ve{\progvar_1}'$ is replaced by $\ve{\pfv}',\ve{\progvar_1}'$.
\item[(m2)]$\pre$ is replaced by $\pre'$ which is defined by
$\pre'(\ve{\pfv},\ve{\progvar_1},\ve{\progvar_2}) \Leftrightarrow \pre(\ve{\progvar_1},\ve{\progvar_2})$, \ie, the prophecy values are unconstrained in the precondition.
\item[(m3)] $F_1$ is replaced by $F_1'$ defined by
$F_1'(\ve{\pfv},\ve{\progvar_1}) \Leftrightarrow F_1(\ve{\progvar_1})$.
\item[(m4)] $T_1$ is replaced by $T_1'$ defined by
$T_1'(\ve{\pfv},\ve{\progvar_1},\ve{\pfv}',\ve{\progvar_1}') \Leftrightarrow T_1(\ve{\progvar_1},\ve{\progvar_1}') \wedge \ve{\pfv} = \ve{\pfv}'$.
\item[(m5)] $\post$ is replaced by $\post'$ defined by
$\post'(\ve{\pfv},\ve{\progvar_1},\ve{\progvar_2}) \Leftrightarrow (\ve{\pfv} = \ve{\progvar_1} \Rightarrow \post(\ve{\progvar_1},\ve{\progvar_2}))$, \ie, if the prophecy was correct then the original post condition must hold.
\item[(m6)] Each occurrence of $T_2(\ve{\progvar_2},\ve{\progvar_2}')$ is replaced by
$\fnr(\ve{\pfv},\ve{\progvar_2},r) \land U_2(r,\ve{\progvar_2},\ve{\progvar_2}')$ where $\fnr$ is a functional predicate variable.
\end{itemize}
\end{definition}
Modifications (m1)-(m5) concern prophecy variables.  They are initialized arbitrarily as shown in (m2), propagated
unmodified through the transitions as shown in (m4), and finally
checked if they match $P_1$'s outputs in (m5).  Modification (m6)
adds functional predicate variables to express the angelic
non-deterministic choices of $P_2$.  The functional predicate
variables shift the onus of making the right choices to the solver's
task of discovering sufficient assignments to them.  Importantly, the
functional predicate takes the prophecy variables as parameters, thus
allowing dependence on the final outputs of the demonic side.

\begin{definition}[TS-GNI through constraints]
\label{def:tsgni_encode}
\normalfont
We define \PCSPWFFN{} constraints
$\mathcal{C}_\mathrm{TSGNI}$ as $\mathcal{C}_\mathrm{CoT}$ in
Def.~\ref{def:coterm_encode} but with modifications of 
Def.~\ref{def:tigni_encode} except (m3) and (m5), and with the following modifications:
\item[(m3')] $F_1$ is replaced by $F_1'$ defined by
$F_1'(\ve{\pfv},\ve{\progvar_1}) \Leftrightarrow F_1(\ve{\progvar_1}) \wedge \ve{\pfv} = \ve{\progvar_1}$.
\item[(m5')] The clause $\inv(\ve{\pfv},\ve{\progvar_1},\ve{\progvar_2}) \wedge F_1'(\ve{\pfv},\ve{\progvar_1}) \wedge F_2(\ve{\progvar_2}) \Rightarrow \post(\ve{\progvar_1},\ve{\progvar_2})$ is added.
\end{definition}
$\mathcal{C}_\mathrm{TSGNI}$ is similar to $\mathcal{C}_\mathrm{TIGNI}$ except that it contains the difference bound and well-foundedness constraints to handle the ``co-termination'' aspect of TS-GNI, \ie, if $P_1$ terminates and makes an output then $P_2$ must also be able terminate and make a matching output.  One subtle aspect of the encoding is that (m3') modifies the final state predicate for $P_1$ to enforce co-termination only when the prophecy is correct.  However, it is worth noting that TS-GNI is {\em not} a conjunction of TI-GNI and co-termination.  For instance, the GNI example from Sec.~\ref{sec:overviewgni} satisfies TS-GNI but does not satisfy co-termination.

\begin{theorem}[Soundess and Completeness of of TI-GNI]
\label{thm:tigni}
The given pair of programs satisfy TI-GNI iff $\mathcal{C}_\mathrm{TIGNI}$ is satisfiable.
\end{theorem}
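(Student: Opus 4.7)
The plan is to mirror the two-direction argument of Theorem~\ref{thm:ksafety}, but augmented to handle prophecy variables (modifications (m1)--(m5)) and the angelic choices witnessed by the functional predicate variable $\fnr$ (modification (m6)). The setup is: show that any semantic solution of $\mathcal{C}_\mathrm{TIGNI}$ yields, for every pre-related pair and every terminating run of $P_1$, a witnessing terminating (and $\post$-related) or divergent run of $P_2$; and conversely, show that TI-GNI allows one to construct such a solution.

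\paragraph{Soundness ($\Leftarrow$).}
Fix a semantic solution $\rho$. Given $\ve{v_1},\ve{v_2}$ with $\pre(\ve{v_1},\ve{v_2})$ and a terminating trace $\pi = \pi[1],\dots,\pi[n]$ of $P_1$ from $\ve{v_1}$ to $\ve{v_1}'$, set the prophecy $\ve{w} := \ve{v_1}'$. By (m2) and clause~(1), $\inv(\ve{w},\pi[1],\ve{v_2})$ holds. I would then inductively build a joint execution, maintaining the invariant at each joint state $(\ve{w},\pi[j_t],\mu_t)$: clause~(5) supplies some $\sch_A$; for $P_1$ follow $\pi$ (self-looping once $j_t=n$, using that $F_1$ forces $T_1$ to self-loop), and for $P_2$ use the unique $r$ given by $\rho(\fnr)(\ve{w},\mu_t,\cdot)$ together with $U_2$ to take the next step. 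Invariant preservation is clause~(3) specialized via (m4) and (m6). The remaining analysis is a case split on the fate of this joint run: (i) if both sides become final, then clause~(2) combined with (m5) yields $\post(\ve{v_1}',\mu_t)$ since the prophecy is correct, witnessing the first disjunct of TI-GNI; (ii) if $P_2$ never reaches a final state, the fairness clauses~(4) and~(5) rule out the possibility that $P_2$ stops being scheduled (once $F_1(\pi[n])$ holds, clause~(4) forces $2 \in A$ whenever the scheduler fires), so $P_2$ is scheduled infinitely often and we extract a genuine infinite $T_2$-trace from $\ve{v_2}$, giving $\ve{v_2} \reach_2 \bot$, the second disjunct.

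\paragraph{Completeness ($\Rightarrow$).}
Assume TI-GNI. I would interpret the schedulers by the simple ``$P_1$-first'' policy: $\sch_{\{1\}} \equiv \neg F_1(\ve{\progvar_1})$, $\sch_{\{2\}} \equiv F_1(\ve{\progvar_1}) \wedge \neg F_2(\ve{\progvar_2})$, and $\sch_{\{1,2\}}$ set true when both are final (or unused). For each pair $(\ve{w},\mu)$ such that some pre-related initial pair $(\ve{v_{10}},\ve{v_{20}})$ admits a $P_1$-run from $\ve{v_{10}}$ terminating at $\ve{w}$ and a partial $P_2$-run from $\ve{v_{20}}$ reaching $\mu$ under an angelic strategy that still preserves TI-GNI's disjunction, use TI-GNI and the axiom of choice to fix one next choice $r_{(\ve{w},\mu)}$ that extends to a successful (terminating-and-$\post$-related or divergent) continuation; then define $\rho(\fnr)(\ve{w},\mu,r) \Leftrightarrow r = r_{(\ve{w},\mu)}$, which is total and functional by construction. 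Extend arbitrarily to unreachable arguments. Finally, let $\inv$ be the set of joint states reachable from a pre-related pair under this scheduler and this $\fnr$. Clauses (1), (3), (4), and (5) follow directly from the definitions; clause~(2) with (m5) is vacuous when the prophecy is wrong and, when $\ve{w}=\ve{v_1}'$, follows because the angelic strategy was chosen precisely to guarantee $\post(\ve{v_1}',\ve{v_2}')$ at co-termination.

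\paragraph{Expected main obstacle.}
The delicate step is defining the functional Skolem $\fnr$ in the completeness direction so that it is \emph{memoryless} (depending only on the current $(\ve{w},\mu)$ rather than on the history of choices), while still witnessing TI-GNI along every continuation. The prophecy $\ve{w}$ absorbs the dependence on $P_1$'s future output, and the ``P$_1$-first'' scheduling ensures that by the time $P_2$ moves, $\ve{v_1}$ already equals $\ve{w}$ in all runs we must handle. This lets me cast the angel's task as a single-player reachability/safety game on $P_2$ parameterized by $\ve{w}$, for which memoryless strategies suffice modulo choice; I expect this to be the one nontrivial bookkeeping point to make rigorous.
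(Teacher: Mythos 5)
Your proposal is correct, and its skeleton matches the paper's: both directions hinge on determinizing $P_2$ by the interpretation of $\fnr$ and on instantiating the prophecy at $P_1$'s final output so that (m5) fires exactly when the prophecy is realized. The differences are in the realization. For soundness, the paper does not re-run the fairness argument inline as you do; it observes that once $\fnr$ is substituted by its interpretation, $\mathcal{C}_\mathrm{TIGNI}$ \emph{is} a $k$-safety constraint set for $P_1$ and the determinized $P_2'$ against $\pre'$ and $\post'$, and simply invokes Theorem~\ref{thm:ksafety}; your inlined case analysis (co-termination vs.\ $P_2$ scheduled infinitely often) is sound but re-proves what the reduction gives for free, and both treatments share the same implicit assumption that the $r$ chosen by $\fnr$ always admits a $U_2$-successor. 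For completeness, the paper builds $\detfun$ by totally ordering, for each final state $\ve{v_1}'$ of $P_1$, the list $L(\ve{v_1}')$ of matching terminating and diverging $P_2$-traces and recording angelic choices first-come-first-served, then argues that every determinized run stays inside the states visited by these witness traces and hence never reaches a $\post$-violating final state; it uses the lock-step scheduler. You instead argue via the winning region of a one-player safety game parameterized by the prophecy (the angel's objective in TI-GNI being precisely ``never reach a final state violating $\post(\ve{w},\cdot)$'', since divergence is an acceptable outcome), which is the cleaner conceptual explanation of \emph{why} a memoryless $\fnr$ exists, at the cost of invoking choice more abstractly; your ``$P_1$-first'' scheduler is an equally valid instantiation but is not actually needed for memorylessness, since $\fnr$ is parameterized by the prophecy rather than by $P_1$'s current state. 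Both constructions exploit the same underlying fact, and notably your safety-game framing makes transparent why this construction must be modified for TS-GNI (where the objective acquires a reachability component and the paper's ``overwriting'' variant of $\detfun$ becomes necessary), a point the paper only flags in a footnote.
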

\begin{theorem}[Soundess and Completeness of TS-GNI]
\label{thm:tsgni}
The given pair of programs satisfy TS-GNI iff $\mathcal{C}_\mathrm{TSGNI}$ is satisfiable.
\end{theorem}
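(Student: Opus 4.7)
The plan is to prove both directions by adapting the arguments for Theorem~\ref{thm:coterm} (co-termination) and Theorem~\ref{thm:tigni} (TI-GNI), while carefully accounting for the interplay between the prophecy variables and the co-termination-style clauses inherited from $\mathcal{C}_\mathrm{CoT}$. The prophecy vector $\ve{\pfv}$ acts as an oracle for $P_1$'s final state; modification (m3') turns on the co-termination obligation only on executions where the oracle is correct, which is precisely the fragment of behaviour on which TS-GNI demands a matching terminating $P_2$-execution.

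For soundness, fix a semantic solution $\rho$ and suppose $\pre(\ve{v_1},\ve{v_2})$ together with a witness trace $\ve{v_1} \reach_1 \ve{v_1}'$. Instantiate $\ve{\pfv}$ with $\ve{v_1}'$ and use functionality of $\fnbnd$ to obtain a bound $b$, so that clause (1) of $\mathcal{C}_\mathrm{CoT}$ yields $\inv(0,b,\ve{v_1}',\ve{v_1},\ve{v_2})$. Build a joint trajectory by repeatedly firing a scheduler action justified by clause (5), stepping $P_1$ along the fixed demonic trace and stepping $P_2$ via the unique choice determined by $\fnr$ together with $U_2$; invariance of $\inv$ follows from the modified (3x) clauses. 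Termination of both sides then follows essentially as in Theorem~\ref{thm:coterm}: clause (2) keeps the step-count difference $d$ in $[-b,b]$ while both sides are live, clauses (4a)/(4b) combined with fairness prevent starvation, and clause (6) together with well-foundedness of $\wfr$ forces $P_2$ to terminate after $F_1'$ first holds --- which, thanks to (m3'), happens precisely when the prophecy is matched. At that point the added clause (m5') yields $\post(\ve{v_1}',\ve{v_2}')$.

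For completeness, assume TS-GNI holds. For every $(\ve{v_1},\ve{v_2})$ with $\pre(\ve{v_1},\ve{v_2})$ and every demonic $P_1$-trace reaching some $\ve{v_1}'$, choose an angelic $P_2$-trace from $\ve{v_2}$ to a matching $\ve{v_2}'$ with $\post(\ve{v_1}',\ve{v_2}')$; align the two traces lock-step and extend the shorter one by self-loops at its final state. Define $\rho(\inv)$ to be the set of intermediate configurations $(d,b,\ve{\pfv},\ve{v_1}'',\ve{v_2}'')$ arising in such an alignment (with $\ve{\pfv}=\ve{v_1}'$ and $b$ the absolute difference in lengths of the chosen pair of traces), $\rho(\sch_a)$ to reflect which side moves next in the alignment, $\rho(\fnr)$ to read off the angelic trace's non-deterministic choice, $\rho(\fnbnd)$ to return the bound $b$, and $\rho(\wfr)$ to be strict decrease of the remaining length of the $P_2$-trace. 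Verification of clauses (1)--(6) and (m5') is then a routine case analysis.

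The hard part will be the completeness direction, specifically synthesizing $\rho(\fnbnd)$ and the scheduler coherently so that all clauses hold uniformly, despite $P_2$'s non-determinism and the possibility that alternative angelic branches may diverge. The key technical device is (m3'), which confines the co-termination obligation to prophecy-matching executions: non-terminating demonic traces, on which TS-GNI makes no demand, therefore never activate clauses (2) or (6), and on mismatched prophecies $\rho$ can be chosen arbitrarily. Exploiting this both when closing the soundness argument (only when the prophecy matches) and when ensuring $\rho$ is well-defined on irrelevant inputs (in completeness) is the main technical delicacy.
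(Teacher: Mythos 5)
Your soundness direction is essentially the paper's argument with the reduction inlined: the paper determinizes $P_2$ by $\rho(\fnr)$ into a program $P_2'$, observes that the resulting constraint set is exactly the conjunction of the $k$-safety encoding (for $\post'$) and the co-termination encoding for $P_1'$ and $P_2'$, and invokes Theorems~\ref{thm:ksafety} and~\ref{thm:coterm} as black boxes, whereas you re-run the trajectory/difference-bound/well-foundedness argument directly. That is fine in substance.

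The completeness direction, however, has a genuine gap at its technical core. You propose to define $\rho(\fnr)$ by ``reading off the angelic trace's non-deterministic choice'' from a per-pair chosen trace, but $\fnr$ is a \emph{functional} predicate variable: it must be a single total function of $(\ve{\pfv},\ve{\progvar_2})$. Distinct initial pairs (or distinct demonic traces ending in the same $\ve{v_1}'$) can yield chosen angelic traces that pass through the \emph{same} $P_2$-state with the same prophecy but make \emph{different} choices there, so your definition is not well-defined; and once the conflicts are resolved into one global function, the determinized $P_2'$ no longer follows any single chosen trace---it can switch between fragments of different traces---so its termination and arrival at a $\post$-state must be re-proved, which is precisely what $\wfr$ and clause (m5$'$) require. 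This is the crux of the paper's proof: it builds, for each prophecy value $\ve{v_1}'$, a choice function $\detfun$ by scanning the whole set $L(\ve{v_1}')$ of matching \emph{finite} angelic traces and \emph{overwriting} earlier recorded choices (the paper explicitly notes that the non-overwriting construction used for TI-GNI would introduce unwanted non-termination here), and then argues that every determinized run stays inside states occurring in $L(\ve{v_1}')$ and terminates because all those traces are finite. Your closing paragraph locates the difficulty in $\fnbnd$ and the scheduler, but both are trivial under lock-step alignment ($b=0$ suffices, and your $b$ as a ``difference in trace lengths'' is itself not a function of $\fnbnd$'s arguments when several demonic traces exist); the real obstacle is the functionality of $\fnr$ and the well-foundedness of $\wfr$ for the globally determinized program, and your proposal does not supply an argument for either.
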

The soundness directions are proven by ``determinizing'' the angelic
choices by solutions to the functional predicate variables and
reducing the argument to those of $k$-safety and co-termination.  The
completeness directions are proven by ``synthesizing'' sufficient angelic
choice functions from program executions.

\begin{example}
\label{ex:tsgni}
Via the encoding, our \pcsat{} tool is able to verify the TS-GNI example from Sec.~\ref{sec:overviewgni} by automatically inferring not only the functional predicate described there but also relational invariants and well-founded relations given in \full{the appendix.}{the extended report~\cite{Unno2021b}.}  For space, the concrete constraint set is also given in \full{the appendix.}{\cite{Unno2021b}.}
\end{example}

\begin{remark}
\label{rem:head_disj}
The angelic non-determinism encoding can be optimized by using head disjunctions when the non-determinism is finitary (\ie, $\max_{\ve{v}} \length{\{ \ve{v}' \mid T_2(\ve{v},\ve{v}')\}}$ is finite) instead of using functional predicate variables.  For this, we modify clauses (3) and (3x)'s of Def.~\ref{def:tigni_encode} and \ref{def:tsgni_encode} to contain multiple positive occurrences of $\inv$ where each occurrence represents one of the finitely many possible choices.
\end{remark}

\begin{remark}
Recall that we allow any program to be non-deterministic.  The $k$-safety and co-termination encodings treat non-determinism in all programs as demonic, whereas the GNI encodings treat those in one program (\ie, $P_1$) as demonic and those in the other program (\ie, $P_2$) as angelic.  In general, an arbitrary program can be made angelic by applying the transformation done in the angelic side of GNI encodings (to factor out non-determinism).
\end{remark}


\section{Constraint Solving Method for \PCSPWFFN{}}
\label{sec:csolve}

We describe a CEGIS-based method for finding a (syntactic) solution of the given \PCSPWFFN{} $(\clauses,\kind)$.
%
%
Our method iterates the following phases until convergence.  The iteration maintains and builds a sequence $\sigma$ of {\em candidate solutions} and a sequence $\examples$ of {\em example instances} where $\examples^{(i)}$ are ground clauses obtained from $\clauses$ by instantiating the term variables and serve as a counterexample to the candidate solution $\sigma^{(i-1)}$, for each $i$-th iteration.
The iteration starts from $\examples^{(1)}=\emptyset$.

{\bf Synthesis Phase}: We check if $(\examples^{(i)},\kind)$ is unsatisfiable.  If so, we stop by returning $\examples^{(i)}$ as a genuine counterexample to the input problem $(\clauses,\kind)$.  Otherwise, we use the synthesizer $\synthTB$ (cf.~Sec.~\ref{sec:synth_tb}) to find a solution $\sigma^{(i)}$ of $(\examples^{(i)},\kind)$, which will be used as the next candidate solution.

{\bf Validation Phase}: We check if $\sigma^{(i)}$ is a genuine solution to $(\clauses,\kind)$ by using an SMT solver.  If so, we stop by returning $\sigma^{(i)}$ as a solution.  Otherwise, for each clause $c \in \clauses$ not satisfied by $\sigma^{(i)}$, we obtain a term substitution $\theta_c$ such that $\dom{\theta_c}=\ftv{c}$ and $\not\models \theta_c(\sigma^{(i)}(c))$.  We then update the example set by adding a new example instance for each unsatisfied clause (i.e., $\examples^{(i+1)}=\examples^{(i)} \cup \myset{\theta_c(c)}{c \in \clauses \land \not\models \sigma^{(i)}(c)}$), and proceed to the next iteration.

The above procedure satisfies the usual {\em progress property} of CEGIS: discovered counterexamples and candidate solutions are not discovered again in succeeding iterations.  Furthermore, as discussed in Sec.~\ref{sec:synth_tb}, by carefully designing the synthesizer $\synthTB$ by incorporating {\em stratified} CEGIS, we achieve {\em completeness} in the sense of \cite{Jhala2006,Terauchi2015}:
 if the given \PCSPWFFN{} $(\clauses,\kind)$ has a syntactic solution expressible in the stratified families of templates, a solution of the \PCSPWFFN{} is eventually found by the procedure.
In Sec.~\ref{sec:synth_tb}, we discuss the details of the synthesis phase.  There, for space, we focus on the theory of quantifier-free linear integer arithmetic (QFLIA).  For space, we defer the details of the unsatisfiability checking process to \full{Appendix~\ref{sec:unsat_ex}.}{the extended report~\cite{Unno2021b}.}

\begin{remark}
The implementation described in Sec.~\ref{sec:eval} contains an additional phase called \emph{resolution phase} for accelerating the convergence.  There, we first apply unit propagation repeatedly to the given $\examples^{(i)}$ to obtain positive examples $\examples^{(i)+}$ of the form $X(\myseq{v})$ and negative examples $\examples^{(i)-}$ of the form $\neg X(\myseq{v})$.  We then repeatedly apply resolution principle to the clauses in the input clauses $\clauses$ and the clauses $\examples^{(i)+} \cup \examples^{(i)-}$ to obtain additional positive and negative examples.
\end{remark}



\subsection{Predicate Synthesis with Stratified Families of Templates}
\label{sec:synth_tb}
\begin{figure}[t]
{\it Stratified Template Family for Ordinary Predicate Variables:}
\vspace{-8pt}
\begin{align*}
\begin{array}{rcl}
\template_X^\bullet(\mathit{nd},\mathit{nc},\mathit{ac},\mathit{ad})
&\defeq&
\lambda (x_1,\dots,x_{\arity{X}}). \bigvee_{i=1}^{\mathit{nd}} \bigwedge_{j=1}^{\mathit{nc}} c_{i,j,0}+\sum_{k=1}^{\arity{X}} c_{i,j,k} \cdot x_k \geq 0 \\
\phi_X^\bullet(\mathit{nd},\mathit{nc},\mathit{ac},\mathit{ad}) &\defeq&
\bigwedge_{i=1}^{\mathit{nd}} \bigwedge_{j=1}^{\mathit{nc}}
( \sum_{k=1}^{\arity{X}} \abs{c_{i,j,k}} \leq \mathit{ac} )
\land
\abs{c_{i,j,0}} \leq \mathit{ad}
\end{array}
\end{align*}

{\it Stratified Template Family for Well-Founded Predicate Variables:}
\vspace{-8pt}
\begin{align*}
\begin{array}{rcl}
\template_X^\Downarrow(\mathit{np},\mathit{nl},\mathit{nc},\mathit{rc},\mathit{rd},\mathit{dc},\mathit{dd})
&\defeq&
\lambda (\myseq{x},\myseq{y}). 
\bigwedge_{i=1}^{\mathit{np}} \bigwedge_{k=1}^{\mathit{nl}} r_{i,k}(\myseq{x}) \geq 0 \land 
(\bigvee_{i=1}^{\mathit{np}} D_i(\myseq{x})) \land \\
\multicolumn{3}{l}{\hspace{10em}
(\bigvee_{j=1}^{\mathit{np}} D_j(\myseq{y})) \land
(\bigvee_{i=1}^{\mathit{np}}
D_i(\myseq{x}) \land
\bigwedge_{j=1}^{\mathit{np}}
(D_j(\myseq{y}) \imply
\mathit{DEC}_{i,j}(\myseq{x},\myseq{y})
)
)
}
 \\
\phi_X^\Downarrow(\mathit{np},\mathit{nl},\mathit{nc},\mathit{rc},\mathit{rd},\mathit{dc},\mathit{dd})
&\defeq&
\bigwedge_{i=1}^{\mathit{np}}
\bigwedge_{k=1}^{\mathit{nl}}
( \sum_{\ell=1}^{\arity{X}/2} \abs{c_{i,k,\ell}} \leq \mathit{rc} ) \land
\abs{c_{i,k,0}} \leq \mathit{rd}\ \land \\
&&\bigwedge_{i=1}^{\mathit{np}}
\bigwedge_{k=1}^{\mathit{nc}}
( \sum_{\ell=1}^{\arity{X}/2} \abs{c_{i,k,\ell}'} \leq \mathit{dc} ) \land
\abs{c_{i,k,0}'} \leq \mathit{dd} \\
\mathit{DEC}_{i,j}(\myseq{x},\myseq{y})
&\defeq&
\bigvee_{k=1}^{\mathit{nl}}
(
r_{i,k}(\myseq{x}) > r_{j,k}(\myseq{y}) \land
\bigwedge_{\ell=1}^{k-1} r_{i,\ell}(\myseq{x}) \geq r_{j,\ell}(\myseq{y})
)
\\
\multicolumn{3}{c}{
r_{i,k}(\myseq{x}) \defeq
c_{i,k,0} + \sum_{\ell=1}^{\arity{X}/2} c_{i,k,\ell} \cdot x_\ell \hspace{1.5em}
D_i(\myseq{x}) \defeq
\bigwedge_{k=1}^{\mathit{nc}}
c_{i,k,0}' + \sum_{\ell=1}^{\arity{X}/2} c_{i,k,\ell}' \cdot x_\ell \geq 0
}
\end{array}
\end{align*}

{\it Stratified Template Family for Functional Predicate Variables:}
\vspace{-8pt}
\begin{align*}
\begin{array}{rcl}
\template_X^\lambda(\mathit{nd},\mathit{nc},\mathit{dc},\mathit{dd},\mathit{ec},\mathit{ed}) &\defeq&
\lambda (\myseq{x},r).
r=\IF D_1(\myseq{x}) \THEN e_1(\myseq{x})
\ELSE \IF D_2(\myseq{x}) \THEN e_2(\myseq{x}) \cdots\\
&&
\ELSE \IF D_{\mathit{nd}-1}(\myseq{x}) \THEN e_{\mathit{nd}-1}(\myseq{x})
\ELSE e_{\mathit{nd}}(\myseq{x})
\\
\phi_X^\lambda(\mathit{nd},\mathit{nc},\mathit{ec},\mathit{ed},\mathit{dc},\mathit{dd}) &\defeq&
\bigwedge_{i=1}^{\mathit{nd}}
( \sum_{j=1}^{\arity{X}-1} \abs{c_{i,j}} \leq \mathit{ec} )
\land
\abs{c_{i,0}} \leq \mathit{ed}
\ \land \\
&&\bigwedge_{i=1}^{\mathit{nd}-1}
\bigwedge_{j=1}^{\mathit{nc}}
( \sum_{k=1}^{\arity{X}-1} \abs{c_{i,j,k}'} \leq \mathit{dc} )
\land
\abs{c_{i,j,0}'} \leq \mathit{dd} \\
\multicolumn{3}{c}{
e_{i}(\myseq{x}) \defeq c_{i,0} + \sum_{j=1}^{\arity{X}-1} c_{i,j} \cdot x_j \hspace{1.5em}
D_{i}(\myseq{x}) \defeq \bigwedge_{j=1}^{\mathit{nc}} c_{i,j,0}' + \sum_{k=1}^{\arity{X}-1} c_{i,j,k}' \cdot x_k \geq 0
}
\end{array}
\end{align*}

\caption{Stratified Families of Templates}
\label{fig:templates}
\end{figure}

We describe our candidate solution synthesizer $\synthTB$.  $\synthTB$
performs a template-based search for a solution to the given example
instances.  As we shall show, our approach allows searching for
assignments to all predicate variables (of all three kinds) in the
given instance which is important because satisfying assignments to
different predicate variables often inter-dependent.
There, however, is a trade-off between expressiveness and
generalizability.  With less expressive templates like intervals, we
may miss actual solutions.  But with very expressive
templates like polyhedra, there could be many solutions, and a
solution thus returned is liable to overfitting, \ie, the
solution to the example instances becomes too specific to be an actual
solution to the original input clauses.  \cite{Padhi2019} discusses a
similar overfitting issue in the context of grammar-based synthesis.

Our remedy to the issue is {\em stratified families of predicate
templates}, inspired by a similar approach proposed in the context of
predicate abstraction with CEGAR~\cite{Jhala2006,Terauchi2015}.
Initially, we assign each predicate variable a less expressive
template and gradually refine it in a counterexample-guided manner: if
no solution exists in the current template, we generate and analyze an
unsat core to identify the \emph{parameters of the families of
templates} that should be updated.
The stratification of templates thus automatically pushes the template to an expressive one (e.g., polyhedra) when it needs to.  Importantly, with our approach, expressive templates are not always used but only when they should be used.

\subsubsection{Stratified Families of Templates}
We have designed three stratified families of templates shown in Fig.~\ref{fig:templates}, respectively for ordinary ($\KORD$), well-founded ($\KWF$), and functional ($\KFN$) predicate variables.
First, for each ordinary predicate variable $X$, we prepare the stratified family of templates $\template_X^\KORD(\mathit{nd},\mathit{nc},\mathit{ac},\mathit{ad})$ with unknowns $c_{i,j,k}$'s to be inferred and its accompanying constraint $\phi_X^\KORD(\mathit{nd},\mathit{nc},\mathit{ac},\mathit{ad})$.  The body of $\template_X^\KORD$ is a DNF with affine inequalities as atoms.
The parameter $\mathit{nd}$ (resp. $\mathit{nc}$) is the number of disjuncts (resp. conjuncts).  The parameter $\mathit{ac}$ is the upper bound of the sum of the absolute values of coefficients $c_{i,j,k}\ (k > 0)$, and $\mathit{ad}$ is the upper bound of the absolute value of $c_{i,j,0}$.

Secondly, for each functional predicate variable $X$, we prepare the stratified family of templates $\template_X^\KWF(\mathit{np},\mathit{nl},\mathit{nc},\mathit{rc},\mathit{rd},\mathit{dc},\mathit{dd})$ with unknowns $c_{i,j,k}$'s and $c_{i,j,k}'$'s and its accompanying constraint $\phi_X^\KWF(\mathit{np},\mathit{nl},\mathit{nc},\mathit{rc},\mathit{rd},\mathit{dc},\mathit{dd})$.  $\template_X^\KWF$ represents the well-founded relation induced by a {\em piecewise-defined lexicographic affine ranking function}~\cite{Alias2010,Urban2014,Leike2015,Urban2013,Leike2015} where $r_{i,j}$ is the affine ranking function template for the $j$-th lexicographic component of the $i$-th region specified by the discriminator $D_i$.  The parameter $\mathit{np}$ (resp. $\mathit{nl}$) is the number of regions (resp. lexicographic components).  The parameters $\mathit{rc},\mathit{rd},\mathit{dc},\mathit{dd}$ are the upper bounds of (the sums of) the absolute values of unknowns,
similar to $\mathit{ac}$ and $\mathit{ad}$ of $\template_X^\KORD$.
%
The first conjunct of $\template_X^\KWF$ asserts that the return value of each ranking functions is non-negative.  The second and the third conjuncts assert that the discriminators cover all relevant states.  Note that discriminators may overlap, and for such overlapping regions, the maximum return value of the ranking functions is used.  The fourth conjunct asserts that the return value of the piecewise-defined ranking function strictly decreases from $\myseq{x}$ to $\myseq{y}$.  Here, $\mathit{DEC}_{i,j}(\myseq{x},\myseq{y})$ asserts that the return value of the lexicographic ranking function for the $i$-th region at $\myseq{x}$ is greater than that for the $j$-th region at $\myseq{y}$.
It follows that for any substitution $\theta$ for the unknowns in $\template_X^\Downarrow$, $\theta(\template_X^\Downarrow)$ represents a well-founded relation.
Our implementation \pcsat{} uses a refined version of $\template_X^\KWF$ shown in \full{Appendix~\ref{sec:wftemp}.}{the extended report~\cite{Unno2021b}.}

Finally, for each functional predicate variable $X$, we prepare the stratified family of templates $\template_X^\KFN(\mathit{nd},\mathit{nc},\mathit{dc},\mathit{dd},\mathit{ec},\mathit{ed})$ with unknowns $c_{i,j}$'s and $c_{i,j,k}'$'s and its accompanying constraint $\phi_F^\KFN(\mathit{nd},\mathit{nc},\mathit{dc},\mathit{dd},\mathit{ec},\mathit{ed})$.  $\template_X^\KFN$ characterizes a piecewise-defined affine function with discriminators $D_1,\dots,D_{\mathit{nd}-1}$ and branch expressions $e_1,\dots,e_{\mathit{nd}}$. The parameter $\mathit{nc}$ is the number of conjuncts in each discriminator.  The parameters $\mathit{dc},\mathit{dd},\mathit{ec},\mathit{ed}$ are the upper bounds of (the sums of) the absolute values of unknown, similar to $\mathit{ac}$ and $\mathit{ad}$ of $\template_X^\KORD$.
Note that for any substitution $\theta$ for the unknowns in $\template_X^\KFN$, $\theta(\template_X^\KFN)(\myseq{x},r)$ expresses a total function that maps $\myseq{x}$ to $r$.

Next, we give the details of the candidate solution synthesis process.  Let $\myseq{p} \in \mathbb{Z}^n$ where $n$ is the number of parameters summed across all templates, and let $\template_X^\alpha(\myseq{p})$ and $\phi_X^\alpha(\myseq{p})$ (for $\alpha \in \finset{\KORD, \KWF, \KFN}$) project the corresponding parameters.  Each $\myseq{p} \in \mathbb{Z}^n$ induces a {\em solution space} $\denote{}{\myseq{p}} \defeq \{ \template(\myseq{p})[\theta] \mid \theta \models \mathit{Con}(\myseq{p})\}$ where $\template(\myseq{p})[\theta] \defeq \{ X \mapsto \theta(\template_X^{\kind(X)}(\myseq{p})) \mid X \in \fpv{\mathcal{C}}\}$ and $\mathit{Con}(\myseq{p}) \defeq \bigwedge_{X \in \fpv{\mathcal{C}}} \phi_X^{\kind(X)}(\myseq{p})$.

Let $\myseq{p_1} \leq \myseq{p_2}$ be the point-wise ordering.  Note that $\denote{}{\myseq{p}}$ is a finite set for any $\myseq{p} \in \mathbb{Z}^n$, and $\myseq{p_1} \leq \myseq{p_2}$ implies $\denote{}{\myseq{p}_1} \subseteq \denote{}{\myseq{p}_2}$.
We start the CEGIS process with some small initial parameters $\myseq{p}^{(0)}$ (the parameters will be maintained as a state of the CEGIS process).  The synthesis phase of each iteration tries to find a solution $\theta \in \denote{}{\myseq{p}^{(i)}}$ to the given example instances $(\examples,\kind)$ where $\myseq{p}^{(i)}$ are the current parameters.  This is done by using an SMT solver for QFLIA to find $\theta$ satisfying $\bigwedge \template(\myseq{p}^{(i)})[\theta](\examples) \wedge \theta(\mathit{Con}(\myseq{p}^{(i)}))$.  If such $\theta$ is found, we return $\template(\myseq{p}^{(i)})[\theta]$ as the candidate solution for the next validation phase of the CEGIS process.  Note that, by construction of the templates, the solution is guaranteed to assign each well-founded (resp.~functional) predicate variable a well-founded relation (resp.~total function).  Otherwise, no solutions to the given example instances $(\examples,\kind)$ can be found in $\denote{}{\myseq{p}^{(i)}}$, and we update the parameters to some $\myseq{p}^{(i+1)} > \myseq{p}^{(i)}$ such that $\denote{}{\myseq{p}^{(i+1)}}$ contains a solution for $(\examples,\kind)$.  Here, it is important to do the update in a {\em fair manner}~\cite{Jhala2006,Terauchi2015}, that is, in any infinite series of updates $\myseq{p}^{(0)},\myseq{p}^{(1)},\dots$, every parameter is updated infinitely often (the details are deferred to below).  By the progress property and the fact that every $\denote{}{\myseq{p}}$ is finite, this ensures that every parameter is updated infinitely often in an infinite series of CEGIS iterations.  We thus obtain the following property.
\begin{theorem}
\label{thm:relcomp}
Our CEGIS-procedure based on stratified families of templates is complete in the sense of \cite{Jhala2006,Terauchi2015}: if there is $\myseq{p}$ and $\sigma \in \denote{}{\myseq{p}}$ such that $\sigma$ is a syntactic solution to the given \PCSPWFFN{} $(\clauses,\kind)$, a syntactic solution to $(\clauses,\kind)$ is eventually found by the procedure.
\end{theorem}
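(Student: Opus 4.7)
The plan is to prove termination by contradiction, using three ingredients already established: (i) the \emph{progress property} (candidate solutions are pairwise distinct across iterations), (ii) the \emph{finiteness} of each stratum $\denote{}{\myseq{p}}$, and (iii) the \emph{fairness} of the parameter update together with the monotonicity $\myseq{p_1} \leq \myseq{p_2} \Rightarrow \denote{}{\myseq{p_1}} \subseteq \denote{}{\myseq{p_2}}$. Let $\sigma$ and $\myseq{p}^*$ be the hypothesized witness, so $\sigma \in \denote{}{\myseq{p}^*}$ and $\sigma$ is a syntactic solution of $(\clauses,\kind)$. Assume toward contradiction that the procedure runs forever, and split according to whether parameter updates occur only finitely often or infinitely often.

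In the first case, the parameters stabilize at some $\myseq{p}^\infty$ from an iteration $i^*$ onward; then at every iteration $i \geq i^*$ the synthesizer $\synthTB$ must succeed in producing some candidate (otherwise a parameter update would occur, contradicting stabilization), yielding an infinite sequence of pairwise distinct candidates all lying in the finite set $\denote{}{\myseq{p}^\infty}$, which is impossible. In the second case, fairness together with monotonic parameter growth gives a single index $j^*$ beyond which $\myseq{p}^{(i)} \geq \myseq{p}^*$ holds pointwise, and monotonicity of $\denote{}{\cdot}$ then gives $\sigma \in \denote{}{\myseq{p}^{(i)}}$ for all $i \geq j^*$. Because $\sigma$ is a genuine syntactic solution, it satisfies every ground instance of a clause of $\clauses$, and hence satisfies $\examples^{(i)}$ for every $i$. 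Thus $\sigma$ is always a legal template-restricted witness for $(\examples^{(i)},\kind)$ beyond $j^*$, so $\synthTB$ can never fail past $j^*$ and cannot trigger any further parameter update, contradicting the infinitely-many-updates assumption. This rules out non-termination; and because that same $\sigma$ continuously witnesses satisfiability of $(\examples^{(i)},\kind)$, the unsatisfiability exit of the synthesis phase never fires either, so termination must occur via the validation phase accepting some candidate $\sigma^{(i)}$, which by definition is a genuine syntactic solution.

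The main obstacle I anticipate is the fairness/monotonicity bookkeeping in the second case: fairness is stated for the infinite sequence of parameter \emph{updates}, and one must argue carefully that this component-wise unboundedness, together with the non-decreasing nature of $\myseq{p}^{(i)}$, yields a single threshold $j^*$ past which all coordinates \emph{simultaneously} dominate $\myseq{p}^*$, rather than merely cofinally many iterations where this holds for each coordinate individually. A secondary subtlety worth spelling out is that the side conditions $\rho \models \mathit{WF}(X)$ and $\rho \models \mathit{FN}(X)$ required of a semantic solution are built into the template families themselves---any coefficient assignment satisfying $\phi_X^\Downarrow$ or $\phi_X^\lambda$ automatically yields a well-founded relation or a total function, respectively---so the completeness argument does not have to re-verify these kind constraints on $\sigma$ or on the candidate eventually accepted.
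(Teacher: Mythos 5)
Your proof is correct and follows essentially the same route as the paper, whose own justification is only the terse remark preceding the theorem that combines the progress property, the finiteness of each stratum, and fair parameter updates; you have simply organized that sketch into an explicit two-case contradiction and supplied the bookkeeping (the threshold past which all coordinates simultaneously dominate, the observation that a genuine syntactic solution satisfies every ground example instance and every kind constraint, and the exclusion of the unsatisfiability exit) that the paper leaves implicit.
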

Note that, while the solution space of each stratum (\ie,
$\denote{}{\myseq{p}^{(i)}}$) is finite, our procedure searches the
infinite solution space obtained by taking the infinite union of the
solution spaces of the template family strata (\ie, $\bigcup_{i \in
  \omega}\denote{}{\myseq{p}^{(i)}}$).

\begin{remark}
Our template-based synthesis simultaneously finds ordinary, well-founded, and functional predicates that are mutually dependent through the given $(\examples,\kind)$.  This means that templates for different kinds of predicate variables are updated in a synchronized and balanced manner, which benefits the synthesis of mutually dependent witnesses for a relational property (see \full{Appendix~\ref{app:sol_doublesquareni}, \ref{app:sol_coterm}, and \ref{app:sol_tsgni}}{the extended report~\cite{Unno2021b}} for examples).
\end{remark}

\paragraph{Updating Parameters of Template Families via Unsat Cores.}
We now describe the parameter update process.
We first obtain the unsat core of the unsatisfiability of 
$\bigwedge \template(\myseq{p}^{(i)})[\theta](\examples) \wedge \theta(\mathit{Con}(\myseq{p}^{(i)}))$ from the SMT solver.  We then analyze the core to obtain the parameters of template families, such as the number of conjuncts and disjuncts, that have caused the unsatisfiability.  Here, there could be a dependency between predicate variables and in such a case our unsat core analysis enumerates all the involved predicate variables from which we obtain the parameters of template families to be updated.
We then increment these parameters in some fair manner, by limiting the maximum differences between different parameters to some bounded threshold, and repeatedly solve the resulting constraint until a solution is found.
Thus, the parameters of stratified families of templates are grown on-the-fly guided by the reasons for unsatisfiability.
We found that a careful design of parameter update strategies important for scaling the stratified CEGIS to hard relational verification problems.  The manual tuning, however, is tiresome and suboptimal.  We leave as future work to investigate methods for automatic tuning of parameter update strategies.



\section{Evaluation}
\label{sec:eval}
To evaluate the presented verification framework, we have implemented \pcsat{}, a satisfiability checking tool for \PCSPWFFN{} based on stratified CEGIS.  \pcsat{} supports the theory of Booleans and the quantifier-free theory of linear inequalities over integers and rationals.  The tool is implemented in OCaml, using Z3~\cite{Moura2008} as the backend SMT solver.
%
We ran the tool on a diverse collection of 20 relational verification problems, consisting of $k$-safety,
co-termination, and GNI problems.  Though we have manually reduced them to \PCSPWFFN{} using the presented method in Sec.~\ref{sec:apps}, this process can be easily automated.
The full benchmark set is provided in \full{Appendix~\ref{sec:benchmarks}.}{the extended report~\cite{Unno2021b}.}  All experiments have been conducted on 3.1GHz Intel Xeon Platinum 8000 CPU and 32 GiB RAM with the time limit of 600 seconds.

The experimental results are summarized in Table~\ref{tab:exps}.  The columns ``Time (s)'' and ``\#Iters'' respectively show elapsed wall clock time in seconds and numbers of CEGIS iterations.  \pcsat{} solved 15 verification problems fully automatically and 5 problems labeled with the symbol $\dag$ and/or $\ddag$ semi-automatically.  For the 4 problems labeled with $\dag$, we manually provided small hints for invariant synthesis (interested readers are referred to \full{Appendix~\ref{sec:benchmarks}}{\cite{Unno2021b}}).  The provided hints for all but \verb|SquareSum| are non-relational invariants that can be inferred prior to relational verification by using a \CHCS{} solver or an invariant synthesizer.  For the 2 problems labeled with $\ddag$, we manually chose the initial value for the parameters of the template family for ordinary predicate variables to reduce the elapsed time.  This can be automated by running \pcsat{} with different initial values in parallel.

The problems \verb|DoubleSquareNI_h**|, \verb|HalfSquareNI|, \verb|ArrayInsert|, and \verb|SquareSum| are $k$-safety verification problems obtained from \cite{Shemer2019} that require non-lock-step scheduling.\footnote{We omitted \verb|ArrayIntMod| from \cite{Shemer2019} because its verification requires the theory of arrays which the current version of \pcsat{} does not fully support.}  The problems \verb|DoubleSquareNI_h**| are generated from Example~\ref{ex:ksafety} by a case analysis of the valuation for the boolean variables $h_1$ and $h_2$.  \pcsat{} solved all the $k$-safety problems but \verb|SquareSum| \emph{fully automatically}.  The tool \pdsc{} proposed in \cite{Shemer2019} can verify them but requires the user to provide the atomic predicates for expressing relational invariants and schedulers.
The problems \verb|CotermIntro1| and \verb|CotermIntro2| are asymmetric co-termination problems obtained from the symmetric problem in Example~\ref{ex:coterm}
and are verified by \pcsat{} fully automatically.  
The problems \verb|TS_GNI_h**| are generated from Example~\ref{ex:tsgni} by a case analysis and are verified by \pcsat{} with small non-relational hints.
We have also tested \pcsat{} on various TS-GNI (\verb|SimpleTS_GNI1|, \verb|SimpleTS_GNI2|, \verb|InfBranchTS_GNI|) and TI-GNI problems (\verb|TI_GNI_h**|) and obtained promising results.
As far as we know, no existing tools can verify these non-$k$-safety relational problems.

Furthermore, manual inspection of the \pcsat{}'s output logs for the GNI problems that required hints revealed that the functional predicate synthesis appears to be the main bottleneck of the current version.  In fact, we confirmed that the problems can be solved in less than 10 seconds if appropriate functional predicates for angelic non-determinism are manually provided.  As future work, we plan to investigate methods for improved functional predicate synthesis.

\begin{table*}[t]
\caption{Experimental results of \pcsat{} on the relational verification benchmarks}
\label{tab:exps}
\begin{center}
\begin{tabular}{|l|r|r||l|r|r|}
    \hline
    Program               & Time (s) & \#Iters & Program                & Time (s) & \#Iters \\ \hline\hline
    DoubleSquareNI\_hFT   & 17.762   & 42      & HalfSquareNI           & 11.853   & 35      \\ \hline
    DoubleSquareNI\_hTF   & 26.495   & 55      & ArrayInsert$\ddag$     & 118.671  & 73      \\ \hline
    DoubleSquareNI\_hFF   & 2.944    & 9       & SquareSum$\dag$$\ddag$ & 337.596  & 117     \\ \hline
    DoubleSquareNI\_hTT   & 4.055    & 11      & SimpleTS\_GNI1         & 5.397    & 14      \\ \hline
    CotermIntro1          & 19.322   & 80      & SimpleTS\_GNI2         & 8.919    & 26      \\ \hline
    CotermIntro2          & 15.871   & 73      & InfBranchTS\_GNI       & 2.607    & 4       \\ \hline
    TS\_GNI\_hFT$\dag$    & 47.083   & 78      & TI\_GNI\_hFT$\dag$     & 4.389    & 16      \\ \hline
    TS\_GNI\_hTF          & 5.076    & 17      & TI\_GNI\_hTF           & 2.277    & 6       \\ \hline
    TS\_GNI\_hFF          & 7.174    & 24      & TI\_GNI\_hFF           & 2.968    & 6       \\ \hline
    TS\_GNI\_hTT$\dag$    & 23.495   & 53      & TI\_GNI\_hTT           & 4.148    & 22      \\ \hline
\end{tabular}
\end{center}
\end{table*}


\section{Related Work}
\label{sec:related}
\subsection{Relational Verification}
There has been substantial work on verifying relational properties.
They include program logics, type systems, or program analysis frameworks such as abstract interpretation and model checking~\cite{Volpano1996,DBLP:conf/popl/Benton04,DBLP:conf/cav/FinkbeinerRS15,Sousa2016,DBLP:conf/popl/AssafNSTT17,DBLP:journals/jfp/AguirreBGGS19,DBLP:conf/cav/CoenenFST19}, program transformation approaches such as self-composition or product programs~\cite{DBLP:books/daglib/0078442,Darvas2005,Terauchi2005,Unno2006,DBLP:conf/esorics/Naumann06,DBLP:conf/fm/ZaksP08,Barthe2011,Asada2015,DBLP:conf/pldi/ChurchillP0A19,DBLP:journals/toplas/EilersMH20}, and various other approaches~\cite{Unno2017b,DBLP:conf/pldi/AntonopoulosGHK17,Pick2018,DBLP:conf/cav/FarzanV19,DBLP:journals/pacmpl/ClochardMP20}.  We refer to \cite{DBLP:journals/corr/abs-2007-06421} for an excellent survey.
However, most prior automatic approaches address only the $k$-safety fragment~\cite{Terauchi2005,Clarkson2008} and cannot verify non-$k$-safety (actually, not even hypersafety) properties such as co-termination, TS-NI, TI-GNI, and TS-GNI~\cite{DBLP:journals/afp/BeringerH08,Barthe2020,DBLP:conf/sp/McCullough88}.  The only exception that we are aware is the recent work by Coenen et al.~\cite{DBLP:conf/cav/CoenenFST19} that proposes a sound method for automatically verifying $\forall\exists$ hyperproperties such as GNI for finite state systems.  To our knowledge, we are the {\em first} to propose a sound-and-complete approach to automatically verifying these non-hypersafety properties for infinite state programs.\footnote{However, \cite{DBLP:conf/cav/CoenenFST19} can verify (relational) temporal properties, whereas we only support functional properties that are given by pre and post conditions of whole program runs.  We leave as future work to investigate methods for verifying relational temporal properties of infinite state programs.}

A key task in many relational verification methods, including ours, is
the discovery of {\em relational invariants} which relate the states
of multiple program executions.  While most prior approaches are
limited to fixed execution schedule (or {\em alignment}) such as
lock-step and
sequential~\cite{Barthe2004,Darvas2005,Terauchi2005,Unno2006,DBLP:conf/esorics/Naumann06,Barthe2011,DBLP:journals/toplas/EilersMH20},
a recent work by Shemer et al.~\cite{Shemer2019} has proposed a
$k$-safety property verification method that automatically infers
fair schedulers sufficient to prove
the goal property.  Importantly, the schedulers in their approach can be {\em semantic} in which the choice of which program to execute can depend on the {\em states} of the programs as opposed to the classic {\em syntactic} schedulers such as lock-step and sequential that can only depend on the control locations.  Our approach also infers such fair semantic schedulers, and as remarked before, they enable
solving instances like \textsf{doubleSquare} that are difficult for
previous approaches.  However, \cite{Shemer2019} requires the user to
provide appropriate atomic predicates and is not fully automatic.  By
contrast, our approach soundly and completely encodes the problem as a
constraint satisfaction problem and fully automatically verifies hard
instances like
\textsf{doubleSquare} by constraint solving.

Furthermore, our work extends the fair semantic scheduler
synthesis to beyond $k$-safety problems like co-termination, TI-GNI
and TS-GNI, in a sound and complete manner.  We note that the
extensions are non-trivial and involves delicate uses of functional
predicate variables and well-founded predicate variables to ensure
scheduler fairness in the presence of non-termination as well as uses
of prophecy variables and functional predicate variables to handle
angelic non-determinism.  The higher-degree of automation and the
extension to non-$k$-safety properties are thanks to the expressive
power of our novel constraint framework \PCSPWFFN{}.

\subsection{Predicate Constraint Solving}

Our \PCSPWFFN{} solving technique builds on and generalizes a number of techniques developed for \CHCS{} solving as well as invariant and ranking function discovery.  Most closely related to our constraint solving method are CEGIS-based~\cite{Solar-Lezama2006} and data-driven approaches to solving \CHCS{}~\cite{Sharma2013a,Sharma2013,Garg2014,Krishna2015,Garg2016,Padhi2016,Champion2018,Ezudheen2018,Zhu2018,Fedyukovich2018,Padhi2019}.  As remarked before, the new \PCSPWFFN{} framework is strictly more expressive than \CHCS{} and extending the prior techniques to the new framework is non-trivial.

Our stratified CEGIS is inspired by the idea of stratified languages of predicates proposed in the context of predicate abstraction with CEGAR~\cite{Jhala2006,Terauchi2015}.  It is also similar in spirit to the work by Padhi et al.~\cite{Padhi2019}, but they use a stratified family of grammars.  Also none of these prior works use unsat cores for updating the language/grammar stratum, synthesize well-founded relations and functional predicates, or support non-Horn clauses.

%
%

Our class of \PCSPWFFN{} constraints is related to {\em
existentially-quantified Horn clauses} (E-CHCs) introduced by Beyene
et al.~\cite{Beyene2013}.  E-CHCs does not have non-Horn clauses or
functional predicate variables.  However, it has disjunctive well-foundedness
constraints which are similar to our well-founded predicate variables.
Also, existential quantifiers can be used to encode head
disjunctions and functional predicates.
We conjecture that \PCSPWFFN{} and E-CHCs are inter-reducible, but it
is not trivial to fill the gap.  Also, inter-reducibility is a
desirable feature: different formats may have different benefits.  For
relational verification, as we have shown, \PCSPWFFN{} enables direct
sound-and-complete encodings of the problems.  For instance, head
disjunctions allow direct encoding of scheduler fairness and
finitary angelic non-determinism (cf.~Remark~\ref{rem:head_disj}).
And, functional predicate variables can be explicitly given
necessary-and-sufficient parameters to encode angelic non-determinism
and difference bounds for ensuring scheduler fairness in the presence
of non-termination.  The tight encodings also lead to reduction in
search space and benefited the constraint solving.

\section{Conclusion}
\label{sec:conc}
We have introduced the class \PCSPWFFN{} of predicate constraint satisfaction problems that generalizes \CHCS{} with arbitrary clauses, well-foundedness constraints, and functionality constraints.  We have then established a program verification framework based on \PCSPWFFN{} by showing that (1) 
\PCSPWFFN{} can soundly-and-completely encode various classes of relational problems of infinite-state non-deterministic programs, including hard instances of $k$-safety, co-termination, and termination-sensitive generalized non-interference that benefit from state-dependent scheduling/alignment (Theorems~\ref{thm:ksafety}--\ref{thm:tsgni}), and (2) existing \CHCS{} solving and invariants/ranking function synthesis techniques can be adopted to \PCSPWFFN{} solving and further improved with the idea of stratified CEGIS for simultaneously achieving completeness (Theorem~\ref{thm:relcomp}) and practical effectiveness.

In future work we plan to investigate ways 
to improve functional predicate synthesis,
automatic tuning of parameter update strategies for constraint solving,
and whether a constraint-based approach (and the techniques presented in the present paper) can be extended to reason about relational temporal properties such as the ones expressed in hyper temporal logics~\cite{DBLP:conf/post/ClarksonFKMRS14,DBLP:conf/cav/FinkbeinerRS15}.

\emph{Acknowledgments.}
We thank the anonymous reviewers for their suggestions.
This work was supported by ONR grant \# N00014-17-1-2787, JST ERATO HASUO Metamathematics for Systems Design Project (No. JPMJER1603), and JSPS KAKENHI Grant Numbers 17H01720, 18K19787, 19H04084, 20H04162, 20H05703, and 20K20625.
%


\bibliographystyle{splncs04}
\bibliography{abbrv,prog_lang}

\full{
\newpage
\appendix    
\section{The encoding of TI-NI verification from Example~\ref{ex:ksafety}}
\label{app:doublesquareni}

\[
\begin{array}{rcl}
  \inv(\ve{\progvar}_1,\ve{\progvar}_2) & \Leftarrow & x_1 = x_2\ \wedge \\
  && y_1 = 0 \wedge (h_1 \wedge z_1 = 2\times x_1 \vee \neg h_1 \wedge z_1 = x_1)\  \wedge \\
  && y_2 = 0 \wedge (h_2 \wedge z_2 = 2\times x_2 \vee \neg h_2 \wedge z_2 = x_2) \\
  \inv(\ve{\progvar}_1',\ve{\progvar}_2) & \Leftarrow & \inv(\ve{\progvar}_1,\ve{\progvar}_2) \wedge \sch_\TF(\ve{\progvar}_1,\ve{\progvar}_2)\ \wedge \\
  & &  (z_1 > 0 \wedge z_1' = z_1-1 \wedge y_1' = y_1+x_1 \vee z_1 \leq 0 \wedge z_1' = z_1 \wedge y_1' = y_1) \\
  \inv(\ve{\progvar}_1,\ve{\progvar}_2') & \Leftarrow & \inv(\ve{\progvar}_1,\ve{\progvar}_2) \wedge \sch_\FT(\ve{\progvar}_1,\ve{\progvar}_2)\ \wedge \\
  & &  (z_2 > 0 \wedge z_2' = z_2-1 \wedge y_2' = y_2+x_2 \vee z_2 \leq 0 \wedge z_2' = z_2 \wedge y_2' = y_2) \\
  \inv(\ve{\progvar}_1',\ve{\progvar}_2') & \Leftarrow & \inv(\ve{\progvar}_1,\ve{\progvar}_2) \wedge \sch_\TT(\ve{\progvar}_1,\ve{\progvar}_2)\ \wedge \\
  & & (z_1 > 0 \wedge z_1' = z_1-1 \wedge y_1' = y_1+x_1 \vee z_1 \leq 0 \wedge z_1' = z_1 \wedge y_1' = y_1)\ \wedge \\
  & & (z_2 > 0 \wedge z_2' = z_2-1 \wedge y_2' = y_2+x_2 \vee z_2 \leq 0 \wedge z_2' = z_2 \wedge y_2' = y_2) \\
  z_1 > 0 & \Leftarrow & \inv(\ve{\progvar}_1,\ve{\progvar}_2) \wedge \sch_\TF(\ve{\progvar}_1,\ve{\progvar}_2) \wedge z_2 > 0 \\
  z_2 > 0 & \Leftarrow & \inv(\ve{\progvar}_1,\ve{\progvar}_2) \wedge \sch_\FT(\ve{\progvar}_1,\ve{\progvar}_2) \wedge z_1 > 0 \\
  \multicolumn{3}{l}{\sch_\TF(\ve{\progvar}_1,\ve{\progvar}_2) \vee \sch_\FT(\ve{\progvar}_1,\ve{\progvar}_2) \vee \sch_\TT(\ve{\progvar}_1,\ve{\progvar}_2) \Leftarrow \inv(\ve{\progvar}_1,\ve{\progvar}_2) \wedge (z_1 > 0 \vee z_2 > 0)} \\
  y_1' = y_2' & \Leftarrow & \inv(\ve{\progvar}_1,\ve{\progvar}_2) \wedge z_1 \leq 0 \wedge z_2 \leq 0\ \wedge \\
  & & (h_1 \wedge y_1' = y_1 \vee \neg h_1 \wedge y_1' = 2\times y_1)\  \wedge \\
  & &  (h_2 \wedge y_2' = y_2 \vee \neg h_2 \wedge y_2' = 2\times y_2)
\end{array}
\]
where $\ve{\progvar}_1 = x_1,y_1,z_1,h_1$, $\ve{\progvar}_1' = x_1,y_1',z_1',h_1'$, $\ve{\progvar}_2 = x_2,y_2,z_2,h_2$, $\ve{\progvar}_2' = x_2,y_2',z_2',h_2'$, $\TF = \{1\}$, $\FT = \{2\}$, and $\TT = \{1,2\}$.

\section{The \pcsat{} generated solution of Example~\ref{ex:ksafety}}
\label{app:sol_doublesquareni}

\[
\begin{array}{rcl}
  \inv(\ve{\progvar}_1,\ve{\progvar}_2) & \equiv &
  \left(\begin{array}{l}
   \neg h_1 \land \neg h_2 \land \left(\begin{array}{l}
    x_1 = x_2 \land y_1 = y_2 \land z_1 = z_2\ \lor \\
    x_2 + y_2 = 0 \land z_1 < 0\ \land \\
    1 + 2 \cdot x_2 = 2 \cdot z_2 \land 2 + y_1 = y_2
   \end{array}\right) \lor \\
   \neg h_1 \land h_2 \land \left(\begin{array}{l}
    x_1 = x_2 \land 1 + 2 \cdot x_1 \neq z_2 \land 2 \cdot y_1 = y_2 \land 2 \cdot z_1 = z_2\ \lor \\
    x_1 = x_2 \land 2 \cdot x_1 \neq z_2 \land z_1 \geq 1 \land z_2 \geq 1\ \land \\
    x_1 + 2 \cdot y_1 = y_2 \land 2 \cdot z_1 = 1 + z_2
   \end{array}\right) \lor \\
   h_1 \land \neg h_2 \land \left(\begin{array}{l}
    x_1 = x_2 \land y_1 = 2 \cdot y_2 \land z_1 = 2 \cdot z_2\ \lor \\
    x_1 = x_2 \land 2 \cdot x_1 \neq z_1 \land z_1 \geq 1 \land z_2 \geq 1\ \land \\
    y_1 = x_2 + 2 \cdot y_2 \land 1 + z_1 = 2 \cdot z_2
  \end{array}\right) \lor \\
   h_1 \land h_2 \land
    x_1 = x_2 \land y_1 = y_2 \land z_2 = z_2
   \end{array}\right)
\end{array}
\]

\section{The encoding of the co-termination verification from Example~\ref{ex:coterm}}
\label{app:coterm}



\[
\begin{array}{rcl}
  \inv(0,b,\ve{\progvar}_1,\ve{\progvar}_2) & \Leftarrow & \fnbnd(\ve{\progvar}_1,\ve{\progvar}_2,b) \wedge x_1 = x_2 \wedge y_1 = y_2 \\
  \inv(d',b,\ve{\progvar}_1',\ve{\progvar}_2) & \Leftarrow & \inv(d,b,\ve{\progvar}_1,\ve{\progvar}_2) \wedge \sch_\TF(d,b,\ve{\progvar}_1,\ve{\progvar}_2)\ \wedge \\
  & &  (x_1 > 0 \wedge x_1' = x_1-y_1 \vee x_1 \leq 0 \wedge x_1' = x_1)\ \wedge \\
  & &  (x_1 \leq 0 \vee x_2 \leq 0 \vee d' = d + 1) \\
  \inv(d',b,\ve{\progvar}_1,\ve{\progvar}_2') & \Leftarrow & \inv(d,b,\ve{\progvar}_1,\ve{\progvar}_2) \wedge \sch_\FT(d,b,\ve{\progvar}_1,\ve{\progvar}_2)\ \wedge \\
  & &  (x_2 > 0 \wedge x_2' = x_2-2 \cdot y_2 \vee x_2 \leq 0 \wedge x_2' = x_2)\ \wedge \\
  & &  (x_1 \leq 0 \vee x_2 \leq 0 \vee d' = d - 1) \\
  \inv(d,b,\ve{\progvar}_1',\ve{\progvar}_2') & \Leftarrow & \inv(d,b,\ve{\progvar}_1,\ve{\progvar}_2) \wedge \sch_\TT(d,b,\ve{\progvar}_1,\ve{\progvar}_2)\ \wedge \\
  & &  (x_1 > 0 \wedge x_1' = x_1-y_1 \vee x_1 \leq 0 \wedge x_1' = x_1)\ \wedge \\
  & &  (x_2 > 0 \wedge x_2' = x_2-2 \cdot y_2 \vee x_2 \leq 0 \wedge x_2' = x_2) \\
  x_1 > 0 & \Leftarrow & \inv(d,b,\ve{\progvar}_1,\ve{\progvar}_2) \wedge \sch_\TF(d,b,\ve{\progvar}_1,\ve{\progvar}_2) \wedge x_2 > 0 \\
  x_2 > 0 & \Leftarrow & \inv(d,b,\ve{\progvar}_1,\ve{\progvar}_2) \wedge \sch_\FT(d,b,\ve{\progvar}_1,\ve{\progvar}_2) \wedge x_1 > 0 \\
  \multicolumn{3}{l}{\sch_\TF(d,b,\ve{\progvar}_1,\ve{\progvar}_2) \vee \sch_\FT(d,b,\ve{\progvar}_1,\ve{\progvar}_2) \vee \sch_\TT(d,b,\ve{\progvar}_1,\ve{\progvar}_2) \Leftarrow \inv(d,b,\ve{\progvar}_1,\ve{\progvar}_2) \wedge (x_1 > 0 \vee x_2 > 0)} \\
  -b \leq d \wedge d \leq b \land b \geq 0  & \Leftarrow & \inv(d,b,\ve{\progvar}_1,\ve{\progvar}_2) \wedge x_1 > 0 \wedge x_2 > 0 \\
  \wfr_1(\ve{\progvar}_1,\ve{\progvar}_1') & \Leftarrow & \inv(d,b,\ve{\progvar}_1,\ve{\progvar}_2) \wedge x_2\leq0 \wedge x_1>0 \wedge x_1'=x_1-y_1 \\
  \wfr_2(\ve{\progvar}_2,\ve{\progvar}_2') & \Leftarrow & \inv(d,b,\ve{\progvar}_1,\ve{\progvar}_2) \wedge x_1\leq0 \wedge x_2>0 \wedge x_2'=x_2-2 \cdot y_2
\end{array}
\]
where $\ve{\progvar}_1 = x_1,y_1$, $\ve{\progvar}_1' = x_1',y_1$, $\ve{\progvar}_2 = x_2,y_2$, $\ve{\progvar}_2' = x_2',y_2$, $\TF = \{1\}$, $\FT = \{2\}$, and $\TT = \{1,2\}$.

\section{The \pcsat{} generated solution of Example~\ref{ex:coterm}}
\label{app:sol_coterm}

\[
  \begin{array}{rcl}
  \fnbnd(x_1,y_1,x_2,u_2,b) &\equiv& b = 1\\
  \inv(d, b, x_1, y_1, x_2, y_2) &\equiv&
  d = 0 \land b \geq 0 \land b \leq 1 \land
  \left(\begin{array}{l}
    x_1 = x_2 \land y_1 = y_2\ \lor \\
    y_1 = y_2 \land x_1 + y_1 \geq 1 \land x_2 + 2 \cdot y_2 \geq 1
  \end{array}\right) \\
  \wfr_1(x, y, x', y') &\equiv&
  \left(\begin{array}{l}
    x - 1 \geq 0 \land x - 1 > x' - 1\  \land \\
    ((x' > 0 \land y' \leq 0) \Rightarrow x - 1 \geq 0 \land x - 1 > 1 - y')\ \lor \\
    (x > 0 \land y \leq 0) \land 1 - y \geq 0 \land 1 - y > x' - 1\ \land \\
    ((x' > 0 \land y' \leq 0) \Rightarrow 1 - y \geq 0 \land 1 - y > 1 - y')
  \end{array}\right) \\
  \wfr_2(x, y, x', y') &\equiv&
  \left(\begin{array}{l}
    (y' \geq 0 \lor x' > 0 \land y' \geq 0)\ \land \\
    (y \geq 0 \land (y' \geq 0 \Rightarrow -y \geq 0 \land -y > -y')\ \land \\
    \left(\begin{array}{l}
      (x' > 0 \land y' \geq 0) \Rightarrow -y \geq 0 \land -y > x')\ \lor \\
      (x > 0 \land y \geq 0) \land (y' \geq 0 \Rightarrow x \geq 0 \land x > -y')\ \land \\
      ((x' > 0 \land y' \geq 0) \Rightarrow x \geq 0 \land x > x')
    \end{array}\right)
  \end{array}\right)
\end{array}
\]
Here, $\wfr_1$ is induced by the ranking function $\max(x - 1, \texttt{if } x > 0 \land y \leq 0 \texttt{ then } 1 - y \texttt{ else } -\infty)$ and $\wfr_2$ is induced by the ranking function $\max (\texttt{if } y \geq 0 \texttt{ then } -y \texttt{ else } -\infty, \texttt{if } x > 0 \land y \geq 0 \texttt{ then } x \texttt{ else } -\infty)$.

 \section{The encoding of the TS-GNI verification from Example~\ref{ex:tsgni}}
 \label{app:tsgni}

\[
\begin{array}{rcl}
  \inv(0,b,\ve{\progvar}_1,\ve{\progvar}_2) & \Leftarrow & \fnbnd(x_1,h_1,l_1,x_2,h_2,l_2,b) \wedge b_1 \wedge b_2 \wedge l_1 = l_2\ \wedge \\
  & & (h_1 \wedge x_1 = n_1 \vee \neg h_1 \wedge x_1=l_1)\ \wedge \\
  & & (h_2 \wedge \fnr(p,h_2,l_2,x_2) \vee \neg h_2 \wedge x_2=l_2) \\
  \inv(d',b,\ve{\progvar}_1',\ve{\progvar}_2) & \Leftarrow & \inv(d,b,\ve{\progvar}_1,\ve{\progvar}_2) \wedge \sch_\TF(d,b,\ve{\progvar}_1,\ve{\progvar}_2)\ \wedge \\
  & &
  \left(\begin{array}{l}
    b_1 \wedge h_1 \wedge (x_1 \geq l_1 \wedge \neg b_1' \wedge x_1' = x_1 \vee x_1 < l_1 \wedge b_1' \wedge x_1' = x_1)\ \vee \\
    b_1 \wedge \neg h_1 \wedge (b_1' \wedge x_1' = x_1 + 1 \vee \neg b_1' \wedge x_1' = x_1)\ \vee \\
    \neg b_1 \wedge \neg b_1' \wedge x_1' = x_1
  \end{array}\right)\ \wedge \\
  & & (\neg b_1 \vee \neg b_2 \vee d' = d + 1) \\
\inv(d',b,\ve{\progvar}_1,\ve{\progvar}_2') \vee \inv(d',b,\ve{\progvar}_1,\ve{\progvar}_2'') & \Leftarrow & \inv(d,b,\ve{\progvar}_1,\ve{\progvar}_2) \wedge \sch_\FT(d,b,\ve{\progvar}_1,\ve{\progvar}_2)\ \wedge \\
  & &
  \left(\begin{array}{l}
    b_2 \wedge h_2 \wedge \left(\begin{array}{l}
      x_2 \geq l_2 \wedge \neg b_2' \wedge x_2' = x_2 \wedge \neg b_2'' \wedge x_2'' = x_2 \vee \\
      x_2 < l_2 \wedge b_2' \wedge x_2' = x_2 \wedge b_2'' \wedge x_2' = x_2
    \end{array}\right)\ \vee \\
    b_2 \wedge \neg h_2 \wedge (b_2' \wedge x_2' = x_2 + 1 \wedge \neg b_2'' \wedge x_2'' = x_2)\ \vee \\
    \neg b_2 \wedge \neg b_2' \wedge x_2' = x_2 \wedge \neg b_2'' \wedge x_2'' = x_2
  \end{array}\right)\ \wedge \\
  & & (\neg b_1 \vee \neg b_2 \vee d' = d - 1) \\
  \inv(d,b,\ve{\progvar}_1',\ve{\progvar}_2') \vee \inv(d,b,\ve{\progvar}_1',\ve{\progvar}_2'') & \Leftarrow & \inv(d,b,\ve{\progvar}_1,\ve{\progvar}_2) \wedge \sch_\TT(d,b,\ve{\progvar}_1,\ve{\progvar}_2)\ \wedge \\
  & &
  \left(\begin{array}{l}
    b_1 \wedge h_1 \wedge (x_1 \geq l_1 \wedge \neg b_1' \wedge x_1' = x_1 \vee x_1 < l_1 \wedge b_1' \wedge x_1' = x_1)\ \vee \\
    b_1 \wedge \neg h_1 \wedge (b_1' \wedge x_1' = x_1 + 1 \vee \neg b_1' \wedge x_1' = x_1)\ \vee \\
    \neg b_1 \wedge \neg b_1' \wedge x_1' = x_1
  \end{array}\right)\ \wedge \\
  & &
  \left(\begin{array}{l}
    b_2 \wedge h_2 \wedge \left(\begin{array}{l}
      x_2 \geq l_2 \wedge \neg b_2' \wedge x_2' = x_2 \wedge \neg b_2'' \wedge x_2'' = x_2 \vee \\
      x_2 < l_2 \wedge b_2' \wedge x_2' = x_2 \wedge b_2'' \wedge x_2' = x_2
    \end{array}\right)\ \vee \\
    b_2 \wedge \neg h_2 \wedge (b_2' \wedge x_2' = x_2 + 1 \wedge \neg b_2'' \wedge x_2'' = x_2)\ \vee \\
    \neg b_2 \wedge \neg b_2' \wedge x_2' = x_2 \wedge \neg b_2'' \wedge x_2'' = x_2
  \end{array}\right) \\
  b_1 & \Leftarrow & \inv(d,b,\ve{\progvar}_1,\ve{\progvar}_2) \wedge \sch_\TF(d,b,\ve{\progvar}_1,\ve{\progvar}_2) \wedge b_2 \\
  b_2 & \Leftarrow & \inv(d,b,\ve{\progvar}_1,\ve{\progvar}_2) \wedge \sch_\FT(d,b,\ve{\progvar}_1,\ve{\progvar}_2) \wedge b_1 \\
  \multicolumn{3}{l}{\sch_\TF(d,b,\ve{\progvar}_1,\ve{\progvar}_2) \vee \sch_\FT(d,b,\ve{\progvar}_1,\ve{\progvar}_2) \vee \sch_\TT(d,b,\ve{\progvar}_1,\ve{\progvar}_2) \Leftarrow \inv(d,b,\ve{\progvar}_1,\ve{\progvar}_2) \wedge (b_1 \vee b_2)} \\
  -b \leq d \wedge d \leq b \land b \geq 0  & \Leftarrow & \inv(d,b,\ve{\progvar}_1,\ve{\progvar}_2) \wedge b_1 \wedge b_2 \\
  x_1=x_2 & \Leftarrow & \inv(d,b,\ve{\progvar}_1,\ve{\progvar}_2) \wedge \neg b_1 \wedge \neg b_2 \wedge p=x_1 \\
  \wfr(\ve{\progvar}_2,\ve{\progvar}_2') & \Leftarrow & \inv(d,b,\ve{\progvar}_1,\ve{\progvar}_2) \wedge \neg b_1 \wedge p=x_1 \wedge b_2 \wedge h_2 \wedge x_2 < l_2 \wedge x_2'=x_2
\end{array}
\]
where $\ve{\progvar}_1 = p, b_1, x_1, h_1, l_1$, $\ve{\progvar}_1' = p, b_1', x_1', h_1, l_1$, $\ve{\progvar}_2 = b_2, x_2, h_2, l_2$, $\ve{\progvar}_2' = b_2', x_2', h_2, l_2$, $\ve{\progvar}_2'' = b_2'', x_2'', h_2, l_2$, $\TF = \{1\}$, $\FT = \{2\}$, and $\TT = \{1,2\}$.

\section{The \pcsat{} generated solution of Example~\ref{ex:tsgni}}
\label{app:sol_tsgni}

\[
  \begin{array}{rcl}
  \fnbnd(x_1,y_1,x_2,u_2,b) &\equiv& b = 0\\
  \fnr(p,h_2,l_2,x_2) &\equiv& x_2 = p\\
  \inv(d,b,\ve{\progvar}_1,\ve{\progvar}_2) & \equiv &
  \left(\begin{array}{l}
  \neg h_1 \land \neg h_2 \land
    d = 0 \land b = 0 \land b_2 \land x_2 \geq l_2 \land l_1 = l_2\ \lor \\
  \neg h_1 \land h_2 \land d = 0 \land b \geq 0 \land x_1 \geq l_1 \land p = x_2 \land l_1 = l_2\ \lor \\
  h_1 \land \neg h_2 \land \left(\begin{array}{l}
    d = 0 \land b = 0 \land b_2 \land l_1 = l_2\ \lor \\
    l_1 = x_2 \land p = x_1 \land x_1 \geq l_1 \land d \geq 1 + b \land\ \\
    b = l_2 \land 1 + 2 \cdot x_1 + 2 \cdot x_2 = 0
  \end{array}\right) \lor \\
  h_1 \land h_2 \land \left(\begin{array}{l}
    d = 0 \land b = 0 \land b_1 \land l_1 = l_2 \land x_2 = p\ \lor \\
    \neg b_1 \land x_1 \geq l_1 \land p = x_2 \land l_1 = l_2
  \end{array}\right)
  \end{array}\right) \\
  \wfr(x, h, l, x', h', l') &\equiv&
  \neg h \land l-x \geq 0 \land l-x > l'-x' \lor
  h \land x \geq 0 \land x > x'
\end{array}
\]
where $\ve{\progvar}_1 = p, b_1, x_1, h_1, l_1$ and $\ve{\progvar}_2 = b_2, x_2, h_2, l_2$.

\section{Proof of Theorem~\ref{thm:ksafety}}

\begin{proof}

  \begin{description}[listparindent=\parindent,itemsep=1em]
  \item[(only-if)]\mbox{}\\
  The only-if direction holds from the completeness of the standard
  lock-step product program construction that executes each programs
  synchronously in parallel.  That is, such a product program is
  realized by the the scheduler defined by $\sch_{[k]} = \true$ and
  $\sch_{A} = \false$ for all $A \neq [k]$.  Note that clauses (4) and
  (5) are trivially satisfied by such a scheduler.  The corresponding
  $\inv$ is the set of tuples of states reachable by the lock-step
  evaluation from the tuples of states satisfying $\pre$ (or any
  invariant used to verify the input instance under the lock-step
  product program construction).  It is easy to see that such
  $\inv$ satisfies the rest of the clauses with the scheduler.  (A
  similar argument can also be made with the standard sequential
  product program construction.)
  
  \item[(if)]\mbox{}\\
  We prove the if direction by proving the contrapositive.  So,
  suppose that the tuple of programs violates the $k$-safety property.
  Then, there must be sequences $\pi_1$, \dots, $\pi_k$ such that (a)
  $\pre(\pi_1[1],\dots,\pi_k[1])$ is true, (b) for each $\pi_i$,
  $F_i(\pi_i[\length{\pi_i}])$ is true, (c)
  $\neg \post(\pi_1[\length{\pi_1}],\dots,\pi_k[\length{\pi_k}])$ is true, and (d)
  for each $\pi_i$ and $1 < j \leq \length{\pi_i}$,
  $T_i(\pi_i[j-1],\pi_i[j])$ is true.  The following argument,
  roughly, says that under any scheduler satisfying $\mathcal{C}_\textrm{S}$, we
  can ``reach'' $\pi_1[\length{\pi_1}],\dots,\pi_k[\length{\pi_k}]$ from
  $\pi_1[1],\dots,\pi_k[1]$.  Let $a_i = 1$ for each $i \in [k]$.  Let
  $\ve{v} = \pi_1[a_1],\dots,\pi_k[a_k]$.  By (a) and clause (1), it
  must be the case that $\inv(\ve{v})$ is true.

  If all programs have terminated (i.e., $F_i(\pi_i[a_i])$ for each
  $i$), then by (b), (c) and the fact that programs self-loop after
  reaching final states, $\ve{v}$ invalidates clause (3) and therefore
  we have shown that $\mathcal{C}_\textrm{S}$ is unsatisfiable.  So, suppose that
  there are unfinished programs (i.e.,
  $\bigvee_{i \in [k]} \neg F_i(\pi_i[a_i])$ is true).  By clause (5),
  there must be some $A \in \nepset{[k]}$ such that $\sch_A(\ve{v})$
  is true.  Then, by clause (4), there must be unfinished programs
  that $A$ schedules to be evaluated next, that is,
  $B = \{ i \in A \mid \neg F_{i}(\pi_{i}[a_{i}]) \}$ is non-empty.
  Let $a_i' = a_i + 1$ if $i \in B$ and otherwise let $a_i' = a_i$.
  Let $\ve{v}' = \pi_1[a_1'],\dots,\pi_k[a_k']$.  Then,
  $\inv(\ve{v}')$ must be true by (d) and clause (2).

  Repeating the argument in the above paragraph by setting $\ve{v}$ to
  $\ve{v}'$ and $a_i$ to $a_i'$ for each $i \in [k]$, we will
  eventually reach the terminal tuple of states
  $\pi_1[\length{\pi_1}],\dots,\pi_k[\length{\pi_k}]$ and show that
  $\inv(\pi_1[\length{\pi_1}],\dots,\pi_k[\length{\pi_k}])$ must be true, which,
  by (c), invalidates clause (3) and therefore $\mathcal{C}_\textrm{S}$ is
  unsatisfiable.
  \end{description}
  \qed
\end{proof}

\section{Proof of Theorem~\ref{thm:coterm}}

\begin{proof}
  \begin{description}[listparindent=\parindent,itemsep=1em]
    \item[(only-if)]\mbox{}\\
  Suppose that given
  pair of programs co-terminate.  Let $\sch_\TT = \true$ and $\sch_\FT
  = \sch_\TF = \false$, \ie, let the scheduler be lock-step.  Let
  $\fnbnd(\ve{\progvar},b) = b = 0$ (any other predicate that sets $b$ to be
  non-negative also works).  Let $\inv$ be the set of tuples
  $(d,0,\ve{v_1},\ve{v_2})$ where $\ve{v_1},\ve{v_2}$ are the tuples
  of states reachable from $\pre$ by lock-step evaluation and $d = 0$
  if $\neg F_1(\ve{v_1}) \wedge \neg F_2(\ve{v_2})$ ($d$ is arbitrary
  if $F_1(\ve{v_1}) \vee F_2(\ve{v_2})$).  Let $R$ be the set of
  states of $P_2$ reachable from a pair of states satisfying $\pre$
  after the corresponding execution of $P_1$ has terminated by
  lock-step evaluation.  That is, $R$ is the set of states $\ve{v_2}$
  satisfying the following: there exist a pair of states
  $(\ve{v_1}',\ve{v_2}')$ and a state $\ve{v_1}$ such that
  $\pre(\ve{v_1}',\ve{v_2}')$ is true, $(\ve{v_1},\ve{v_2})$ is
  reachable from $(\ve{v_1}',\ve{v_2}')$ by lock-step evaluation, and
  $F_1(\ve{v_1})$ is true.  Let $\wfr = (R
  \times R) \cap \{ (\ve{v},\ve{v}') \mid T_2(\ve{v},\ve{v}') \}$
  (or any other well-founded relation witnessing the termination of
  $R$).  It is easy to see that these predicates satisfy
  $\mathcal{C}_\textrm{CoT}$.

  \item[(if)]\mbox{}\\
  We prove the if direction by proving the contrapositive.  So,
  suppose that the pair of programs violates co-termination.  Then,
  there must be states $\ve{v_1}$,$\ve{v_2}$, and $\ve{v_1}'$ such that
  $\pre(\ve{v_1},\ve{v_2})$ is true, $\ve{v_1} \reach_1 \ve{v_1}'$, and
  $\ve{v_2} \reach_2 \bot$.  Let $\pi_1$ be a finite sequence such that
  (a1) $\ve{v_1} = \pi_1[1]$, (b1) $F_1(\pi_1[\length{\pi_1}])$ is
  true, and (c1) for each $1 < i \leq \length{\pi_1}$,
  $T_1(\pi_1[i-1],\pi_1[i])$ is true.  Let $\varpi_2$ be an infinite
  sequence such that (a2) $\ve{v_2} = \varpi_2[1]$, (b2) for each $i
  \geq 1$, $F_2(\varpi_2[i])$ is false, and (c2) for each $i > 1$,
  $T_2(\varpi_2[i-1],\varpi_2[i])$ is true.

  Let $c \in \mathbb{Z}$ be such that $\fnbnd(\pi_1[1],\varpi_2[1],c)$
  is true.  By clause (1), it must be the case that
  $\inv(0,c,\pi_1[1],\varpi_2[1])$ is true.  We next show the
  following lemma.
  \begin{lemma}
    \label{lem:wfr}
    Let $d \in \mathbb{Z}$, $1 \leq i \leq \length{\pi_1}$, and $1
    \leq j$.  Suppose $\inv(d,c,\pi_1[i],\varpi_2[j])$ and
    $F_1(\pi_1[i])$.  Then, clauses {\normalfont (6), (5), (4a), (4b),
      (3a), (3c)} cannot be simultaneously satisfied.

  \end{lemma}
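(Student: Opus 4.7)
The plan is to derive a contradiction by building an infinite $\wfr$-descending chain $\wfr(\varpi_2[j], \varpi_2[j+1])$, $\wfr(\varpi_2[j+1], \varpi_2[j+2])$, \ldots along the non-terminating tail of $\varpi_2$, which violates the well-foundedness of $\wfr$ guaranteed by $\kind(\wfr) = \KWF$. To produce this chain, I would prove by induction on $n \geq 0$ the auxiliary claim that there exist integers $d_n$, with $d_0 = d$, such that $\inv(d_n, c, \pi_1[i], \varpi_2[j+n])$ holds. Since $\varpi_2$ is non-terminating, $\neg F_2(\varpi_2[j+n])$ for every $n$ by property (b2), and by the lemma's hypothesis $F_1(\pi_1[i])$ is preserved throughout (the $P_1$-component is never moved in the construction). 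Clause (6) then applies at every step with the transition $T_2(\varpi_2[j+n], \varpi_2[j+n+1])$ furnished by $\varpi_2$, delivering the desired $\wfr$-edge.

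The inductive step is the core of the argument. Assuming $\inv(d_n, c, \pi_1[i], \varpi_2[j+n])$, clause (5) together with $\neg F_2(\varpi_2[j+n])$ forces at least one of $\sch_\TT, \sch_\FT, \sch_\TF$ to hold at this state. Clause (4b) immediately rules out $\sch_\TF$: its premise with $\neg F_2(\varpi_2[j+n])$ would entail $\neg F_1(\pi_1[i])$, contradicting $F_1(\pi_1[i])$. In the $\sch_\FT$ case, I would apply clause (3a) with $\ve{\progvar_2}' = \varpi_2[j+n+1]$; the third disjunct of its premise is satisfied by $F_1(\pi_1[i])$, so $\inv(d_{n+1}, c, \pi_1[i], \varpi_2[j+n+1])$ holds for some (in fact any) choice of $d_{n+1}$. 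In the $\sch_\TT$ case, the self-looping property $T_1(\pi_1[i], \pi_1[i])$ implied by $F_1(\pi_1[i])$ together with clause (3c) yields $\inv(d_n, c, \pi_1[i], \varpi_2[j+n+1])$, so we set $d_{n+1} = d_n$. Clause (4a) is vacuously consistent with the situation because its antecedent requires $\neg F_1(\pi_1[i])$, and clause (3b) is not needed because the $\sch_\TF$ case has been eliminated.

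The main obstacle is the scheduler case analysis and, in particular, recognizing that clause (6) is insensitive to which admissible scheduler fires: both $\sch_\FT$ and $\sch_\TT$ produce a $T_2$-step on which (6) acts to yield the well-founded descent. The bookkeeping of the difference counter $d_n$ is largely harmless because $F_1(\pi_1[i])$ discharges every constraint on how $d$ evolves in clauses (3a) and (3c); hence the chain is built without any interference from the bound $b = c$ or the difference parameter, and the well-foundedness hypothesis on $\wfr$ finally collapses the construction.
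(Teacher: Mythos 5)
Your proposal is correct and follows essentially the same route as the paper's proof: rule out $\sch_\TF$ via (4b), use (3a) or (3c) (with the self-loop $T_1(\pi_1[i],\pi_1[i])$ at the final state) to propagate $\inv$ along the non-terminating tail of $\varpi_2$, and apply (6) at every step to build an infinite $\wfr$-descending chain contradicting well-foundedness. Your explicit induction on $n$ and the observation that $F_1(\pi_1[i])$ makes the difference-counter updates in (3a)/(3c) vacuous are just a more careful spelling-out of the paper's ``repeat the argument'' step.
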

  \begin{proof}
    By $F_1(\pi_1[i])$, (4a), (4b), and (5), we have that
    $\sch_\TT(d,c,\pi_1[i],\varpi_2[j])$ or
    $\sch_\FT(d,c,\pi_1[i],\varpi_2[j])$ must be true.  In either
    case, by (c2), (3a), (3c) and the fact that $T_1(\pi_1[i],\pi_1[i])$,
    $\inv(d',b,\pi_1[i],\varpi_2[j+1])$ must be true for some $d' \in
    \mathbb{Z}$ (incidentally, $d' = d+1$ or $d' = d$).  Also, by
    $F_1(\pi_1[i])$ and (6b),
    $\wfr(\varpi_2[j],\varpi_2[j+1])$ is true.

    Repeating the above argument with $j$ updated to $j+1$ and $d$
    updated to $d'$, we derive that
    $\wfr(\varpi_1[j'],\varpi_2[j'+1])$ must be true for all $j' \geq j$.
    However, this violates the condition that $\wfr$ is a
    well-founded relation.  Therefore, the clauses cannot be
    satisfied. \qed
  \end{proof}

  We now return to the proof of the theorem.  Let $d = 0$, $a_1 = 1$
  and $a_2 = 1$.  Note that $\inv(d,c,\pi_1[a_1],\varpi_2[a_2])$ is true
  and $\length{d} \leq c$.
  
  If $F_1(\pi_1[a_1])$ is true, then by Lemma~\ref{lem:wfr}, the constraint
  is unsatisfiable.  So, suppose that $\neg F_1(\pi_1[a_1])$.
  By clause (5), it must be the case that either
  \begin{description}[itemsep=0pt]
  \item[(s$_\TT$)] $\sch_\TT(d,c,\pi_1[a_1],\varpi_2[a_2])$ is true;
  \item[(s$_\TF$)] $\sch_\TF(d,c,\pi_1[a_1],\varpi_2[a_2])$ is true; or
  \item[(s$_\FT$)] $\sch_\FT(d,c,\pi_1[a_1],\varpi_2[a_2])$ is true.
  \end{description}
  If (s$_\TT$) then let $d' = d$, $a_1' = a_1+1$, and $a_2' = a_2+1$.
  If (s$_\FT$) then let $d' = d-1$, $a_1' = a_1$, and $a_2' = a_2+1$.
  If (s$_\TF$ then let $d' = d+1$, $a_1' = a_1$, and $a_2' = a_2+1$.
  In any case, by (b2), (3a), (3b), and (3c),
  $\inv(d',c,\pi_1[a_1'],\varpi_2[a_2'])$ must be true.  But by (b2)
  and (2), it must be the case that $\length{d'} \leq c$.

  Repeating the argument in the above paragraph by setting $d = d'$,
  $a_1 = a_1'$, and $a_2 = a_2'$, we will eventually reach $a_1 =
  \length{\pi_1}$ such that $\inv(d,c,\pi_1[a_1],\varpi_2[a_2])$ is
  true for some $d$ and $a_2$, because $\sch_\FT$ can only become true
  finitely often due to the difference bound.  At this point,
  $F_1(\pi_1[a_1])$ is true.  Therefore by Lemma~\ref{lem:wfr}, the
  constraint is unsatisfiable.
  \end{description}
  \qed
\end{proof}
  
\section{Proof of Theorem~\ref{thm:tigni}}

\begin{proof}

\begin{description}[listparindent=\parindent,itemsep=1em]
  \item[(if)]\mbox{}\\
Suppose that $\rho$ is a solution to $\mathcal{C}_{TIGNI}$.  Let
$P_2'$ be a program whose transition relation $T_2'$ is defined as follows:
  \[
  T_2' = \{ (\ve{v},\ve{v_2},\ve{v},\ve{v_2}') \mid
  \exists r. \rho(\fnr)(\ve{v},\ve{v_2},r) \wedge U_2(r,\ve{v_2},\ve{v_2}') \}
  \]
where $\length{\ve{v}} = \length{\ve{\progvar_1}}$ and
$\fnr$ is the functional predicate variable added in (m6) of
Def.~\ref{def:tigni_encode}.  That is, $P_2'$ is $P_2$ augmented to
(1) take prophecy values as input and propagate them across the
transitions, and (2) determinize the transitions by the assignment to $\fnr$ given in $\rho$.  We write $\reach_2'$ for the
reachability relation of $P_2'$.\footnote{The encoding given in
  Sec.~\ref{sec:appsgni} modified $P_1$ to propagate the prophecy
  values.  Here, we modify $P_2$ for the job.  The resulting
  constraint set is equivalent, but the proof becomes somewhat
  simpler.}

Let $\mathcal{C}_\textrm{TIGNI}'$ be $\mathcal{C}_\textrm{TIGNI}$ but
modified so that its occurrences of $\fnr(\ve{\pfv},\ve{\progvar_2},r)$ are
replaced by $\rho(\fnr)(\ve{\pfv},\ve{\progvar_2},r)$.  Clearly, $\rho$ is a
solution to $\mathcal{C}_\textrm{TIGNI}'$.  Also,
$\mathcal{C}_\textrm{TIGNI}'$ is a $k$-safety constraint set
(cf.~Def.~\ref{def:ksafe_encode}) for $P_1$ and $P_2'$ against the
pre-condition $\pre'$ and the post-condition $\post'$. Therefore, by
Theorem~\ref{thm:ksafety}, $P_1$ and $P_2'$ satisfies the $k$-safety
property given by $\pre'$ and $\post'$.

Now, suppose that $\pre(\ve{v_1},\ve{v_2})$ and $\ve{v_1} \reach_1
\ve{v_1}'$.  Let $\ve{v}$ be valuations of
prophecy variables (\ie, $\length{\ve{v}} = \length{\ve{v_1}}$).  We
have $\pre'(\ve{v},\ve{v_1},\ve{v_2})$.  Also, by $k$-safety of $P_1$
and $P_2'$, either (a) $(\ve{v},\ve{v_2}) \reach_2' \bot$ or (b) there
exists $\ve{v_2}'$ such that $(\ve{v},\ve{v_2}) \reach_2'
(\ve{v},\ve{v_2}')$ and
$\post'(\ve{v},\ve{v_1}',\ve{v_2}')$.

If the former is true for some $\ve{v}$, then we have $\ve{v_2}
\reach_2 \bot$ by letting $P_2$ resolve the non-deterministic choices
according to the execution $(\ve{v},\ve{v_2}) \reach_2' \bot$.
Otherwise, (b) holds for all $\ve{v}$.  Therefore, there exists
$\ve{v_2}'$ such that $(\ve{v_1},\ve{v_2}) \reach_2'
(\ve{v_1},\ve{v_2}')$ and $\post'(\ve{v_1},\ve{v_1}',\ve{v_2}')$.
Therefore, by letting $P_2$ resolve the non-deterministic choices
according to the execution $(\ve{v_1},\ve{v_2}) \reach_2'
(\ve{v_1},\ve{v_2}')$, we have $\ve{v_2} \reach_2 \ve{v_2}'$ and
$\post(\ve{v_1}',\ve{v_2}')$.  Therefore, $P_1$ and $P_2$ satisfies
TI-GNI given by $\pre$ and $\post$.

\item[(only-if)]\mbox{}\\
  Suppose that $P_1$ and $P_2$ satisfy TI-GNI given by
  $\pre$ and $\post$.  Let us write $\pi : \ve{v} \reach_i \ve{v}'$ if
  $\pi$ is a finite sequence witnessing the reachability relation
  $\ve{v} \reach_i \ve{v}'$.  Likewise, let us write $\varpi : \ve{v}
  \reach_i \bot$ if $\varpi$ is an infinite sequence witnessing the
  non-termination $\ve{v} \reach_i \bot$.
  
  For each state $\ve{v_1}'$ of $P_1$, we define $L(\ve{v_1}')$ to be
  a (possibly infinite) list of finite and infinite sequences obtained
  by totally ordering the elements of the following set:
  \[
  \begin{array}{l}
    \{ \pi \mid \exists \ve{v_1},\ve{v_2},\ve{v_1}'\ve{v_2}'.\pre(\ve{v_1},\ve{v_2}) \wedge \ve{v_1} \reach_1 \ve{v_1}' \wedge \pi:\ve{v_2} \reach_2 \ve{v_2}' \wedge \post(\ve{v_1}',\ve{v_2}') \} \ \cup \\
    \{ \varpi \mid \exists \ve{v_1},\ve{v_2},\ve{v_1}'.\pre(\ve{v_1},\ve{v_2}) \wedge \ve{v_1} \reach_1 \ve{v_1}' \wedge \varpi:\ve{v_2} \reach_2 \bot \}
    \end{array}
  \]
  That is, $L(\ve{v_1}')$ is a list of the terminating and
  non-terminating execution traces of $P_2$ that can match a
  (terminating) execution trace of $P_1$ whose final state is
  $\ve{v_1}'$.  Note that, if $\pre(\ve{v_1},\ve{v_2})$ and $v_1
  \reach_1 \ve{v_1}'$, then there exists $\xi \in L(\ve{v_1}')$ such
  that $\xi[1] = \ve{v_2}$.

  For $\xi \in L(\ve{v_1}')$ and $1 \leq i \leq \length{\xi}$ (where
  $\length{\varpi} = \infty$ for an infinite trace $\varpi$), let
  $\angelchoice(\xi,i)$ be the angelic non-deterministic choice made
  in the $i$-th step of the execution $\xi$.  For a function $f$, we
  write $f[a \mapsto b]$ for the function defined by $f[a \mapsto
    b](a) = b$ and $f[a \mapsto b](c) = f(c)$ for all $c \neq a$.

  Next, we define a function $\detfun$ by the following process. Initialize
  $\detfun \leftarrow \emptyset$.  Then, for each state $\ve{v_1}'$ of
  $P_1$, apply the steps below until $L(\ve{v_1}')$ is empty:
  
  \begin{itemize}[itemsep=0pt]
  \item[1.] Take the head element $\xi \in L(\ve{v_1}')$.
  \item[2.] Scan $\xi$ forwards and, at each position $i$, record its angelic choice by updating $\detfun \leftarrow \detfun[(\ve{v_1}',\xi[i]) \mapsto \angelchoice(\xi,i)]$ if $(\ve{v_1}',\xi[i]) \notin \dom{\detfun}$ and otherwise leave $\detfun$ unchanged.
  \end{itemize}
  Finally, for each pair $(\ve{v_1},\ve{v_2})$ of states of $P_1$ and
  and $P_2$ such that $(\ve{v_1},\ve{v_2}) \notin \dom{\detfun}$,
  update $\detfun$ by setting $\detfun \leftarrow
  \detfun[(\ve{v_1},\ve{v_2}) \mapsto r]$ where $r$ is arbitrary.
  Note that this is an infinite ``process'' in general since the
  number of states of $P_1$, $\length{L(\ve{v_1}')}$, and the length
  of a sequence in $L(\ve{v_1}')$, can all be infinite.  However, it
  is well-defined.  Importantly, $\detfun$ thus constructed is a total
  function from the pairs of $P_1$ and $P_2$ states.

  Note that using $\detfun$ as the determinizing choice function in
  $P_2'$ would make $P_1$ and $P_2'$ (cf.~the soundness proof above)
  satisfy the $k$-safety property given by $\pre'$ and $\post'$.  This
  follows from the fact that, if $\pre(\ve{v_1},\ve{v_2})$ and
  $\ve{v_1} \reach_1 \ve{v_1}'$, then any execution of $P_2'$ from
  $(\ve{v_1}',\ve{v_2})$ only visits states $(\ve{v_1}',\ve{v})$ where
  $\ve{v}$ occurs in $L(\ve{v_1}')$ and therefore may only reach an
  output $(\ve{v_1}',\ve{v_2}')$ such that
  $\post(\ve{v_1}',\ve{v_2}')$ is true.\footnote{However, the behavior
    of $P_2'$ is not necessarily equivalent to what the traces in
    $L(\ve{v_1}')$ stipulate.  For instance, $P_2'$ may non-terminate
    even when all traces in $L(\ve{v_1}')$ are finite.}
  
  Therefore, the rest of the proof follows the structure of the
  completeness direction of the proof of Theorem~\ref{thm:ksafety}.
  Let us call a pair of states $\ve{v_1}$ and
  $(\ve{v_1}',\ve{v_2})$ of $P_1$ and $P_2'$ {\em initial} if
  $\pre'(\ve{v_1}',\ve{v_1},\ve{v_2})$ is true (\ie,
  $\pre(\ve{v_1},\ve{v_2})$ is true).  Also, in what follows, we
  assume that $P_2'$ uses $\detfun$ as the determinizing choice function.

  Let $\sch_\TT = \true$ and $\sch_\FT = \sch_\TF = \false$, \ie, let
  the scheduler be lock-step.  Let $\fnr$ be the set of tuples
  $(\ve{v_1}',\ve{v_2},r)$ such that $\detfun(\ve{v_1}',\ve{v_2}) = r$, \ie,
  $\fnr$ expresses the graph of $\detfun$.  Note that $\fnr$ trivially
  satisfies the function-ness requirement.  Let $\inv$ be the set of
  tuples $(\ve{v_1}',\ve{v_1},\ve{v_2})$ where $\ve{v_1}$ and
  $(\ve{v_1}',\ve{v_2})$ are pair of states of $P_1$ and $P_2'$
  reachable from some initial pair of states by lock-step evaluation.
  It is easy to see that these predicates satisfy $\mathcal{C}_\textrm{TIGNI}$.
\end{description}
\qed
\end{proof}

\section{Proof of Theorem~\ref{thm:tsgni}}

\begin{proof}

\begin{description}[listparindent=\parindent,itemsep=1em]
\item[(if)]\mbox{}\\

  The proof is similar to the soundness direction of
  Theorem~\ref{thm:tigni} and proceeds by determinizing $P_2$ using
  the solution to the constraint set.  So, suppose that $\rho$ is a
  solution to $\mathcal{C}_{TSGNI}$.  Let $P_2'$ be a program whose
  transition relation $T_2'$ is defined as follows:
  \[
  T_2' = \{ (\ve{v},\ve{v_2},\ve{v},\ve{v_2}') \mid
  \exists r. \rho(\fnr)(\ve{v},\ve{v_2},r) \wedge U_2(r,\ve{v_2},\ve{v_2}') \}
  \]
  where $\length{\ve{v}} = \length{\ve{\progvar_1}}$ and $\fnr$ is the
  functional predicate variable added in (m6) of
  Def.~\ref{def:tigni_encode}.  That is, $P_2'$ is $P_2$ augmented to
  (1) take prophecy values as input and propagate them across the
  transitions, and (2) determinize the transitions by the assignment
  to $\fnr$ given in $\rho$.  We write $\reach_2'$ for the
  reachability relation of $P_2'$.  We also consider modification of
  $P_1$, $P_1'$, that is defined by the transition relation $T_1'$ and
  the final state predicate $F_1'$ given in
  Def.~\ref{def:tsgni_encode}.  We write $\reach_1'$ for the
  reachability relation of $P_1'$.  Note that, propagating the
  prophecy values in both $P_1'$ and $P_2'$ is equivalent to
  propagating them in one of the two, since neither transitions modify
  the prophecy values.

  Let $\mathcal{C}_\textrm{TSGNI}'$ be $\mathcal{C}_\textrm{TSGNI}$
  but modified so that its occurrences of $\fnr(\ve{\pfv},\ve{\progvar_2},r)$
  are replaced by $\rho(\fnr)(\ve{\pfv},\ve{\progvar_2},r)$.  Clearly, $\rho$
  is a solution to $\mathcal{C}_\textrm{TSGNI}'$.  Note that
  $\mathcal{C}_\textrm{TSGNI}'$ asserts a conjunction of the
  $k$-safety for $P_1'$ and $P_2'$ against the pre-condition $\pre'$
  and the post-condition $\post'$ and the co-termination for $P_1'$
  and $P_2'$ against the pre-condition $\pre'$.  Therefore, by
  Theorem~\ref{thm:ksafety}, (a) $P_1'$ and $P_2'$ satisfies the
  $k$-safety property given by $\pre'$ and $\post'$, and by
  Theorem~\ref{thm:coterm}, (b) $P_1'$ and $P_2'$ satisfies the
  co-termination property given by $\pre'$.

  Now, suppose that $\pre(\ve{v_1},\ve{v_2})$ and $\ve{v_1} \reach_1
  \ve{v_1}'$.  We have $\pre'(\ve{v_1},\ve{v_1},\ve{v_2})$ and
  $(\ve{v_1},\ve{v_1}) \reach_1' (\ve{v_1},\ve{v_1}')$.  Therefore, by
  (b), we have that $(\ve{v_1},\ve{v_2}) \not \reach_2' \bot$.
  Therefore, $(\ve{v_1},\ve{v_2}) \reach_2' (\ve{v_1},\ve{v_2}')$ for
  some $\ve{v_2}'$, assuming that every pre-related state has at least
  one execution.  Then, by (a), it must be the case that
  $\post'(\ve{v_1},\ve{v_1}',\ve{v_2}')$.  Therefore, by letting $P_2$
  resolve the non-deterministic choices according to the execution
  $(\ve{v_1},\ve{v_2}) \reach_2' (\ve{v_1},\ve{v_2}')$, we have
  $\ve{v_2} \reach_2 \ve{v_2}'$ and $\post(\ve{v_1}',\ve{v_2}')$.
  Therefore, $P_1$ and $P_2$ satisfies TS-GNI given by $\pre$ and
  $\post$.

\item[(only-if)]\mbox{}\\
  The proof is similar to the completeness
  direction of Theorem~\ref{thm:tigni} and proceeds by constructing a
  determinizing choice function $F$ from the executions of the two
  programs.  We will use some of the notations defined there.  That
  is, we write $\pi : \ve{v} \reach_i \ve{v}'$ if $\pi$ is a finite
  sequence witnessing the reachability relation $\ve{v} \reach_i
  \ve{v}'$, and when $\pi$ is an execution of $P_2$ and $1 \leq i \leq
  \length{\pi}$, we write $\angelchoice(\pi,i)$ be the angelic
  non-deterministic choice made in the $i$-th step of $\pi$.  Also,
  for a function $f$, we write $f[a \mapsto b]$ for the function
  defined by $f[a \mapsto b](a) = b$ and $f[a \mapsto b](c) = f(c)$
  for all $c \neq a$.

  Now we proceed with the proof of the completeness direction.  So,
  suppose that $P_1$ and $P_2$ satisfy TI-GNI given by $\pre$ and
  $\post$.  For each state $\ve{v_1}'$ of $P_1$, we define
  $L(\ve{v_1}')$ to be a (possibly infinite) list of {\em finite}
  sequences obtained by totally ordering the elements of the following
  set:
  \[
  \{ \pi \mid \exists \ve{v_1},\ve{v_2},\ve{v_1}'\ve{v_2}'.\pre(\ve{v_1},\ve{v_2}) \wedge \ve{v_1} \reach_1 \ve{v_1}' \wedge \pi:\ve{v_2} \reach_2 \ve{v_2}' \wedge \post(\ve{v_1}',\ve{v_2}') \}
  \]
  That is, $L(\ve{v_1}')$ is a list of the terminating execution traces of
  $P_2$ that can match a (terminating) execution trace of $P_1$ whose
  final state is $\ve{v_1}'$.  Note that, if $\pre(\ve{v_1},\ve{v_2})$
  and $v_1 \reach_1 \ve{v_1}'$, then there exists $\pi \in
  L(\ve{v_1}')$ such that $\pi[1] = \ve{v_2}$.

  Next, we define a function $\detfun$ by the following process. Initialize
  $\detfun \leftarrow \emptyset$.  Then, for each state $\ve{v_1}'$ of
  $P_1$, apply the steps below until $L(\ve{v_1}')$ is empty:

  \begin{itemize}[itemsep=0pt]
  \item[1.] Take the head element $\pi \in L(\ve{v_1}')$.
  \item[2.] Scan $\pi$ forwards and, at each position $i$, record its
    angelic choice by updating $\detfun \leftarrow
    \detfun[(\ve{v_1}',\pi[i]) \mapsto \angelchoice(\pi,i)]$.
  \end{itemize}
  Finally, for each pair $(\ve{v_1},\ve{v_2})$ of states of $P_1$ and
  and $P_2$ such that $(\ve{v_1},\ve{v_2}) \notin \dom{\detfun}$,
  update $\detfun$ by setting $\detfun \leftarrow
  \detfun[(\ve{v_1},\ve{v_2}) \mapsto r]$ where $r$ is arbitrary.  As
  that in the proof of Theorem~\ref{thm:tigni}, this is an infinite
  ``process'' in general since the number of states of $P_1$ and
  $\length{L(\ve{v_1}')}$ can both be infinite.  However, it is
  well-defined.  Importantly, $\detfun$ thus constructed is a total
  function from the pairs of $P_1$ and $P_2$ states.

  An important difference from the construction of $\detfun$ given in
  Theorem~\ref{thm:tigni} is that in step 2., we always update $\detfun$ by
  the angelic choice, \ie, even if $(\ve{v_1}',\xi[i]) \in \dom{\detfun}$.
  This ensures the ``co-termination'' of $P_2'$ when it is
  determinized by $\detfun$.\footnote{The above ``overwriting''
    construction of $\detfun$ actually also works for the proof of
    Theorem~\ref{thm:tigni}.  However, the construction method given
    there does not work here because it may introduce unwanted
    non-termination.}
  
  Note that using $\detfun$ as the determinizing choice function in $P_2'$
  would make $P_1'$ and $P_2'$ (cf.~the soundness proof above) satisfy
  the $k$-safety property given by $\pre'$ and $\post'$ and the
  co-termination property given by $\pre'$.  This follows from the
  fact that, if $\pre(\ve{v_1},\ve{v_2})$ and $\ve{v_1} \reach_1
  \ve{v_1}'$, then any execution of $P_2'$ from $(\ve{v_1}',\ve{v_2})$
  only visits states $(\ve{v_1}',\ve{v})$ where $\ve{v}$ occurs in
  $L(\ve{v_1}')$ and reaches an output $(\ve{v_1}',\ve{v_2}')$ such
  that $\post(\ve{v_1}',\ve{v_2}')$ is true.  In particular, the
  termination of $P_2'$ from such a state $(\ve{v_1}',\ve{v_2})$ is
  guaranteed by the fact that all traces in $L(\ve{v_1}')$ are finite.

  Therefore, the rest of the proof follows the structures of the
  completeness directions of the proofs of Theorems~\ref{thm:ksafety}
  and \ref{thm:coterm}.  Let us call a pair of states
  $(\ve{v_1}',\ve{v_1})$ and $(\ve{v_1}',\ve{v_2})$ of $P_1'$ and
  $P_2'$ {\em initial} if $\pre'(\ve{v_1}',\ve{v_1},\ve{v_2})$ is
  true (\ie, $\pre(\ve{v_1},\ve{v_2})$ is true).  Also, in what
  follows, we assume that $P_2'$ uses $\detfun$ as the determinizing
  choice function.

  Let $\sch_\TT = \true$ and $\sch_\FT = \sch_\TF = \false$, \ie, let
  the scheduler be lock-step.  Let $\fnr$ be the set of tuples
  $(\ve{v_1}',\ve{v_2},r)$ such that $F(\ve{v_1}',\ve{v_2}) = r$, \ie,
  $\fnr$ expresses the graph of $F$.  Note that $\fnr$ trivially
  satisfies the function-ness requirement.  Let $\fnbnd(\ve{\progvar},b) = b
  = 0$ (any other predicate that sets $b$ to be non-negative also
  works).  Let $\inv$ be the set of tuples
  $(d,0,\ve{v_1}',\ve{v_1},\ve{v_2})$ where $\ve{v_1}',\ve{v_1}$ and
  $(\ve{v_1}',\ve{v_2})$ are reachable from some initial pair of
  states by lock-step evaluation, and $d = 0$ if $\neg
  F_1'(\ve{v_1}',\ve{v_1}) \wedge \neg F_2'(\ve{v_1}',\ve{v_2})$ ($d$
  is arbitrary if $F_1'(\ve{v_1}',\ve{v_1}) \vee
  F_2'(\ve{v_1}',\ve{v_2})$).  Let $R$ be the set of states of $P_2'$
  reachable from some initial pair of states after the corresponding
  execution of $P_1'$ has terminated by lock-step evaluation.  That
  is, $R$ is the set of states $(\ve{v_1}',\ve{v_2})$ satisfying the
  following: there exist a state $(\ve{v_1}',\ve{v_1})$ such that
  $((\ve{v_1}',\ve{v_1}),(\ve{v_1}',\ve{v_2}))$ is reachable from some
  initial pair of states by lock-step evaluation and
  $F_1'(\ve{v_1}',\ve{v_1})$ is true.  Let $\wfr = (R \times R) \cap
  \{ ((\ve{v_1}',\ve{v_2}),(\ve{v_1}',\ve{v_2}')) \mid
  T_2'((\ve{v_1}',\ve{v_2}),(\ve{v_1}',\ve{v_2}')) \}$ (or any other
  well-founded relation witnessing the termination of $R$).  It is
  easy to see that these predicates satisfy
  $\mathcal{C}_\textrm{TSGNI}$.
  \end{description}
\qed
\end{proof}

\section{Unsatisfiability Checking of Example Instances}
\label{sec:unsat_ex}
The unsatisfiability of the given example instances $(\examples,\kind)$ can be decided by an off-the-shelf SAT solver if $\examples$ has only ordinary predicate variables
because $\examples$ is a finite set of clauses not containing term variables.  Otherwise, we use the following (CDCL-like) iterative algorithm staring from $\examples_0=\examples$: For each iteration $i \geq 0$, we first check whether $(\examples_i,\emptyset)$ is unsatisfiable.  If so, then we conclude that $(\examples,\kind)$ is unsatisfiable.  Otherwise, we obtain a satisfying assignment $\sigma$ for $\examples_i$.  Then, for each well-founded predicate variable $X$, we consider the graph comprising the edges $\myset{(\myseq{v}_1,\myseq{v}_2)}{\;\models \sigma(X(\myseq{v}_1,\myseq{v}_2))}$ and enumerate its simple cycles (e.g., by using the algorithm of \cite{Johnson1975}).  Note that such cycles would be counterexamples to the well-foundedness constraint $X$.
Also, for each functional predicate variable $X$, we enumerate the pairs $\finset{X(\myseq{v},v_1),X(\myseq{v},v_2)}$ such that $v_1 \neq v_2$ and $\models \sigma(X(\myseq{v},v_1) \wedge X(\myseq{v},v_2))$, which would be counterexamples to the functionality constraint $X$.
If no such cycles nor pairs exist, we conclude that $(\examples,\kind)$ is satisfiable.  Otherwise, we let $\examples_{i+1}$ be $\examples_i$ but with the following new learnt clauses added:
\begin{itemize}
    \item $\neg X(\myseq{v}_1,\myseq{v}_2) \lor \dots \lor \neg X(\myseq{v}_{m-1},\myseq{v}_m)$ for each simple cycle $\myseq{v}_1,\dots,\myseq{v}_m=\myseq{v}_1$ of each well-founded predicate $X$, and
    \item $\neg X(\myseq{v},v_1) \lor \neg X(\myseq{v},v_2)$ for each pair $\finset{X(\myseq{v},v_1),X(\myseq{v},v_2)}$ of each functional predicate $X$.
\end{itemize}
We then proceed to the next iteration with $\examples_{i+1}$.

It is worth mentioning here that if the original \PCSPWFFN{} $(\clauses,\kind)$ is unsatisfiable and $\clauses$ has no well-founded predicate variable, there always exists an unsatisfiable finite set $\examples$ of example instances of $\clauses$.
However, there is, in general, no such finite witness of the unsatifiability if $\clauses$ has a well-founded predicate variable.

\section{A Refined Stratified Template Family for Well-Founded Predicates}
\label{sec:wftemp}
The stratified template family $\template_X^\KWF$ for well-founded predicates shown in Fig.~\ref{fig:templates} can be further refined without loss of generality by simplifying and using $\IF r(\myseq{x})\geq 0 \THEN r(\myseq{x}) \ELSE {-1}$
instead of $r(\myseq{x})$.

\begin{align*}
\begin{array}{rcl}
\template_X^\Downarrow(\mathit{np},\mathit{nl},\mathit{nc},\mathit{rc},\mathit{rd},\mathit{dc},\mathit{dd})
&\defeq&
\lambda (\myseq{x},\myseq{y}).
\left( \bigvee_{j=1}^{\mathit{np}} D_j(\myseq{y}) \right) \\
&& \;\;\; \land
\left(
\bigvee_{i=1}^{\mathit{np}}
D_i(\myseq{x}) \land
\bigwedge_{j=1}^{\mathit{np}}
\left(D_j(\myseq{y}) \imply
\mathit{DEC}_{i,j}(\myseq{x},\myseq{y})
\right)
\right) \\
\mathit{DEC}_{i,j}(\myseq{x},\myseq{y})
&\defeq&
\bigvee_{k=1}^{\mathit{nl}}
\left(
\begin{array}{l}
r_{i,k}(\myseq{x}) \geq 0 \land
r_{i,k}(\myseq{x}) > r_{j,k}(\myseq{y}) \land \\
\bigwedge_{\ell=1}^{k-1}
\left(
r_{i,\ell}(\myseq{x}) < 0 \land r_{j,\ell}(\myseq{y}) < 0
\lor
r_{i,\ell}(\myseq{x}) \geq r_{j,\ell}(\myseq{y})
\right)
\end{array}
\right)
\end{array}
\end{align*}

\section{Relational Verification Benchmarks}
\label{sec:benchmarks}

Our relational verification benchmark set consists of:
\begin{itemize}
  \item The $k$-safety verification problem \verb|DoubleSquareNI_h**| from Example~\ref{ex:ksafety}, which is originally introduced in \cite{Shemer2019}.
  \item The $k$-safety verification problem \verb|HalfSquareNI| of the following program obtained from \cite{Shemer2019}:
  \begin{alltt}
    pre(low1 == low2)
    halfSquare(int h, int low) \{
      assume(low > h > 0);
      int i = 0, y = 0, v = 0;
      while (h > i) \{
        i++; y += y;
      \}
      v = 1;
      while (low > i) \{
        i++; y += y;
      \}
      return y;
    \}
    post(y1 == y2)
  \end{alltt}
  The encoded constraints are:
  \begin{alltt}
  Inv(b1 : bool, h1, low1, i1, y1, v1, b2 : bool, h2, low2, i2, y2, v2) :-
    low1 = low2, low1 > h1, h1 > 0, low2 > h2, h2 > 0,
    b1, i1 = 0, y1 = 0, v1 = 0,
    b2, i2 = 0, y2 = 0, v2 = 0.
  
  Inv(b1' : bool, h1, low1, i1', y1', v1', b2 : bool, h2, low2, i2, y2, v2) :-
    Inv(b1 : bool, h1, low1, i1, y1, v1, b2 : bool, h2, low2, i2, y2, v2),
    SchTF(b1 : bool, h1, low1, i1, y1, v1, b2 : bool, h2, low2, i2, y2, v2),
    b1 and h1 > i1 and b1' and i1' = i1 + 1 and y1' = y1 + y1 and v1' = v1 or
    b1 and h1 <= i1 and !b1' and i1' = i1 and y1' = y1 and v1' = 1 or
    !b1 and low1 > i1 and !b1' and i1' = i1 + 1 and y1' = y1 + y1 and v1' = v1 or
    !b1 and low1 <= i1 and !b1' and i1' = i1 and y1' = y1 and v1' = v1.
  Inv(b1 : bool, h1, low1, i1, y1, v1, b2' : bool, h2, low2, i2', y2', v2') :-
    Inv(b1 : bool, h1, low1, i1, y1, v1, b2 : bool, h2, low2, i2, y2, v2),
    SchFT(b1 : bool, h1, low1, i1, y1, v1, b2 : bool, h2, low2, i2, y2, v2),
    b2 and h2 > i2 and b2' and i2' = i2 + 1 and y2' = y2 + y2 and v2' = v2 or
    b2 and h2 <= i2 and !b2' and i2' = i2 and y2' = y2 and v2' = 1 or
    !b2 and low2 > i2 and !b2' and i2' = i2 + 1 and y2' = y2 + y2 and v2' = v2 or
    !b2 and low2 <= i2 and !b2' and i2' = i2 and y2' = y2 and v2' = v2.
  Inv(b1' : bool, h1, low1, i1', y1', v1', b2' : bool, h2, low2, i2', y2', v2') :-
    Inv(b1 : bool, h1, low1, i1, y1, v1, b2 : bool, h2, low2, i2, y2, v2),
    SchTT(b1 : bool, h1, low1, i1, y1, v1, b2 : bool, h2, low2, i2, y2, v2),
    b1 and h1 > i1 and b1' and i1' = i1 + 1 and y1' = y1 + y1 and v1' = v1 or
    b1 and h1 <= i1 and !b1' and i1' = i1 and y1' = y1 and v1' = 1 or
    !b1 and low1 > i1 and !b1' and i1' = i1 + 1 and y1' = y1 + y1 and v1' = v1 or
    !b1 and low1 <= i1 and !b1' and i1' = i1 and y1' = y1 and v1' = v1,
    b2 and h2 > i2 and b2' and i2' = i2 + 1 and y2' = y2 + y2 and v2' = v2 or
    b2 and h2 <= i2 and !b2' and i2' = i2 and y2' = y2 and v2' = 1 or
    !b2 and low2 > i2 and !b2' and i2' = i2 + 1 and y2' = y2 + y2 and v2' = v2 or
    !b2 and low2 <= i2 and !b2' and i2' = i2 and y2' = y2 and v2' = v2.
  
  b1 or low1 > i1 :-
    Inv(b1 : bool, h1, low1, i1, y1, v1, b2 : bool, h2, low2, i2, y2, v2),
    SchTF(b1 : bool, h1, low1, i1, y1, v1, b2 : bool, h2, low2, i2, y2, v2),
    b2 or low2 > i2.
  b2 or low2 > i2 :-
    Inv(b1 : bool, h1, low1, i1, y1, v1, b2 : bool, h2, low2, i2, y2, v2),
    SchFT(b1 : bool, h1, low1, i1, y1, v1, b2 : bool, h2, low2, i2, y2, v2),
    b1 or low1 > i1.
  SchTF(b1 : bool, h1, low1, i1, y1, v1, b2 : bool, h2, low2, i2, y2, v2),
  SchFT(b1 : bool, h1, low1, i1, y1, v1, b2 : bool, h2, low2, i2, y2, v2),
  SchTT(b1 : bool, h1, low1, i1, y1, v1, b2 : bool, h2, low2, i2, y2, v2) :-
    Inv(b1 : bool, h1, low1, i1, y1, v1, b2 : bool, h2, low2, i2, y2, v2),
    b1 or low1 > i1 or b2 or low2 > i2.
  
  y1 = y2 :-
    Inv(b1 : bool, h1, low1, i1, y1, v1, b2 : bool, h2, low2, i2, y2, v2),
    !b1, low1 <= i1, !b2, low2 <= i2.
  \end{alltt}
  \item The $k$-safety verification problem \verb|ArrayInsert| of the following program obtained from \cite{Shemer2019} by simulating the control flow of the original array-manipulating program:
  \begin{alltt}
    pre(len1 == len2)
    int arrayInsert(int len, int h) \{
      int i=0;
      while (i < len && i != h) i++;
      len = len + 1;
      while (i < len) i++;
      return i;
    \}
    post(i1 == i2)
  \end{alltt}
  The encoded constraints are:
  \begin{alltt}
  Inv(b1 : bool, len1, h1, i1, b2 : bool, len2, h2, i2) :-
    len1 = len2, b1, i1 = 0, b2, i2 = 0.
  
  Inv(b1' : bool, len1', h1, i1', b2 : bool, len2, h2, i2) :-
    Inv(b1 : bool, len1, h1, i1, b2 : bool, len2, h2, i2),
    SchTF(b1 : bool, len1, h1, i1, b2 : bool, len2, h2, i2),
    b1 and i1 < len1 and i1 <> h1 and b1' and len1' = len1 and i1' = i1 + 1 or
    b1 and (i1 >= len1 or i1 = h1) and !b1' and len1' = len1 + 1 and i1' = i1 or
    !b1 and i1 < len1 and !b1' and len1' = len1 and i1' = i1 + 1 or
    !b1 and i1 >= len1 and !b1' and len1' = len1 and i1' = i1.
  Inv(b1 : bool, len1, h1, i1, b2' : bool, len2', h2, i2') :-
    Inv(b1 : bool, len1, h1, i1, b2 : bool, len2, h2, i2),
    SchFT(b1 : bool, len1, h1, i1, b2 : bool, len2, h2, i2),
    b2 and i2 < len2 and i2 <> h2 and b2' and len2' = len2 and i2' = i2 + 1 or
    b2 and (i2 >= len2 or i2 = h2) and !b2' and len2' = len2 + 1 and i2' = i2 or
    !b2 and i2 < len2 and !b2' and len2' = len2 and i2' = i2 + 1 or
    !b2 and i2 >= len2 and !b2' and len2' = len2 and i2' = i2.
  Inv(b1' : bool, len1', h1, i1', b2' : bool, len2', h2, i2') :-
    Inv(b1 : bool, len1, h1, i1, b2 : bool, len2, h2, i2),
    SchTT(b1 : bool, len1, h1, i1, b2 : bool, len2, h2, i2),
    b1 and i1 < len1 and i1 <> h1 and b1' and len1' = len1 and i1' = i1 + 1 or
    b1 and (i1 >= len1 or i1 = h1) and !b1' and len1' = len1 + 1 and i1' = i1 or
    !b1 and i1 < len1 and !b1' and len1' = len1 and i1' = i1 + 1 or
    !b1 and i1 >= len1 and !b1' and len1' = len1 and i1' = i1,
    b2 and i2 < len2 and i2 <> h2 and b2' and len2' = len2 and i2' = i2 + 1 or
    b2 and (i2 >= len2 or i2 = h2) and !b2' and len2' = len2 + 1 and i2' = i2 or
    !b2 and i2 < len2 and !b2' and len2' = len2 and i2' = i2 + 1 or
    !b2 and i2 >= len2 and !b2' and len2' = len2 and i2' = i2.
  
  b1 or i1 < len1 :-
    Inv(b1 : bool, len1, h1, i1, b2 : bool, len2, h2, i2),
    SchTF(b1 : bool, len1, h1, i1, b2 : bool, len2, h2, i2),
    b2 or i2 < len2.
  b2 or i2 < len2 :-
    Inv(b1 : bool, len1, h1, i1, b2 : bool, len2, h2, i2),
    SchFT(b1 : bool, len1, h1, i1, b2 : bool, len2, h2, i2),
    b1 or i1 < len1.
  SchTF(b1 : bool, len1, h1, i1, b2 : bool, len2, h2, i2),
  SchFT(b1 : bool, len1, h1, i1, b2 : bool, len2, h2, i2),
  SchTT(b1 : bool, len1, h1, i1, b2 : bool, len2, h2, i2) :-
    Inv(b1 : bool, len1, h1, i1, b2 : bool, len2, h2, i2),
    b1 or i1 < len1 or b2 or i2 < len2.
  
  i1 = i2 :-
    Inv(b1 : bool, len1, h1, i1, b2 : bool, len2, h2, i2),
    !b1, i1 >= len1, !b2, i2 >= len2.
  \end{alltt}
  \item The $k$-safety verification problem \verb|SquareSum| of the following program obtained from \cite{Shemer2019} by replacing the nonlinear expression \verb|a*a| in the original program with \verb|a|:
  \begin{alltt}
    pre(a1 < a2 && b2 < b1)
    squaresSum(int a, int b) \{
      assume(0 < a < b);
      int c=0;
      while (a<b) \{ c+=a; a++; \}
      return c;
    \}
    post(c2 < c1)
  \end{alltt}
  The encoded constraints are:
  \begin{alltt}
  Inv(a1, b1, c1, a2, b2, c2) :-
    a1 < a2, b2 < b1,
    0 < a1, a1 < b1, 0 < a2, a2 < b2,
    c1 = 0, c2 = 0.
  
  Inv(a1', b1, c1', a2, b2, c2) :-
    Inv(a1, b1, c1, a2, b2, c2),
    SchTF(a1, b1, c1, a2, b2, c2),
    a1 < b1 and c1' = c1 + a1 and a1' = a1 + 1 or a1 >= b1 and c1' = c1 and a1' = a1.
  Inv(a1, b1, c1, a2', b2, c2') :-
    Inv(a1, b1, c1, a2, b2, c2),
    SchFT(a1, b1, c1, a2, b2, c2),
    a2 < b2 and c2' = c2 + a2 and a2' = a2 + 1 or a2 >= b2 and c2' = c2 and a2' = a2.
  Inv(a1', b1, c1', a2', b2, c2') :-
    Inv(a1, b1, c1, a2, b2, c2),
    SchTT(a1, b1, c1, a2, b2, c2),
    a1 < b1 and c1' = c1 + a1 and a1' = a1 + 1 or a1 >= b1 and c1' = c1 and a1' = a1,
    a2 < b2 and c2' = c2 + a2 and a2' = a2 + 1 or a2 >= b2 and c2' = c2 and a2' = a2.
  
  a1 < b1 :-
    Inv(a1, b1, c1, a2, b2, c2),
    SchTF(a1, b1, c1, a2, b2, c2), a2 < b2.
  a2 < b2 :-
    Inv(a1, b1, c1, a2, b2, c2),
    SchFT(a1, b1, c1, a2, b2, c2), a1 < b1.
  SchTF(a1, b1, c1, a2, b2, c2),
  SchFT(a1, b1, c1, a2, b2, c2),
  SchTT(a1, b1, c1, a2, b2, c2) :-
    Inv(a1, b1, c1, a2, b2, c2),
    a1 < b1 or a2 < b2.
  
  c2 < c1 :- Inv(a1, b1, c1, a2, b2, c2), a1 >= b1, a2 >= b2.
  \end{alltt}
  In the experiment, we provided the following constraint as a hint:
  \begin{alltt}
  a1 > 0, b2 < b1 :- Inv(a1, b1, c1, a2, b2, c2).
  \end{alltt}
  \item The co-termination verification problem \verb|CotermIntro| from Example~\ref{ex:coterm}.
  \item The TS-GNI verification problem \verb|TS_GNI_h**| from Example~\ref{ex:tsgni}.
  In the experiment of \verb|TS_GNI_hFT|, we provided the following constraint as a hint:
  \begin{alltt}
  x1 >= low1 :- Inv(pr, d, b, b1 : bool, x1, low1, b2 : bool, x2, low2).
  \end{alltt}
  In the experiment of \verb|TS_GNI_hTT|, we provided the following constraint as a hint:
  \begin{alltt}
  b1 or x1 >= low1 :- Inv(pr, d, b, b1 : bool, x1, low1, b2 : bool, x2, low2).
  \end{alltt}
  Note that these are a part of necessary non-relational invariant.
  \item The TS-GNI verification problem \verb|SimpleTS_GNI1| of the following program:
  \begin{alltt}
    while ( * ) \{ x ++; \}; return (high + x);
  \end{alltt}
  The encoded constraints are:
  \begin{alltt}
  Inv(0, b, b1 : bool, x1, high1, b2 : bool, x2, high2) :-
    FN_DB(x1, high1, x2, high2, b), b1, b2, x1 = x2.
  Inv(d', b, b1' : bool, x1', high1, b2 : bool, x2, high2) :-
    Inv(d, b, b1 : bool, x1, high1, b2 : bool, x2, high2),
    SchTF(d, b, b1 : bool, x1, high1, b2 : bool, x2, high2),
    b1 and (b1' and x1' = x1 + 1 or !b1' and x1' = x1) or
    !b1 and !b1' and x1' = x1,
    (!b1 or !b2 or d' = d + 1).
  Inv(d', b, b1 : bool, x1, high1, b21 : bool, x21, high2),
  Inv(d', b, b1 : bool, x1, high1, b22 : bool, x22, high2) :-
    Inv(d, b, b1 : bool, x1, high1, b2 : bool, x2, high2),
    SchFT(d, b, b1 : bool, x1, high1, b2 : bool, x2, high2),
    b2 and b21 and x21 = x2 + 1 and !b22 and x22 = x2 or
    !b2 and !b21 and x21 = x2 and !b22 and x22 = x1,
    (!b1 or !b2 or d' = d - 1).
  Inv(d, b, b1' : bool, x1', high1, b21 : bool, x21, high2),
  Inv(d, b, b1' : bool, x1', high1, b22 : bool, x22, high2) :-
    Inv(d, b, b1 : bool, x1, high1, b2 : bool, x2, high2),
    SchTT(d, b, b1 : bool, x1, high1, b2 : bool, x2, high2),
    b1 and (b1' and x1' = x1 + 1 or !b1' and x1' = x1) or
    !b1 and !b1' and x1' = x1,
    b2 and b21 and x21 = x2 + 1 and !b22 and x22 = x2 or
    !b2 and !b21 and x21 = x2 and !b22 and x22 = x1.
  
  b1 :-
    Inv(d, b, b1 : bool, x1, high1, b2 : bool, x2, high2),
    SchTF(d, b, b1 : bool, x1, high1, b2 : bool, x2, high2),
    b2.
  b2 :-
    Inv(d, b, b1 : bool, x1, high1, b2 : bool, x2, high2),
    SchFT(d, b, b1 : bool, x1, high1, b2 : bool, x2, high2),
    b1.
  SchTF(d, b, b1 : bool, x1, high1, b2 : bool, x2, high2),
  SchFT(d, b, b1 : bool, x1, high1, b2 : bool, x2, high2),
  SchTT(d, b, b1 : bool, x1, high1, b2 : bool, x2, high2) :-
    Inv(d, b, b1 : bool, x1, high1, b2 : bool, x2, high2), b1 or b2.
  -b <= d and d <= b and b >= 0 :-
    Inv(d, b, b1 : bool, x1, high1, b2 : bool, x2, high2), b1, b2.
  
  high1 + x1 = high2 + x2 :-
    Inv(d, b, b1 : bool, x1, high1, b2 : bool, x2, high2), !b1, !b2.
  
  WF_R2(b2 : bool, x2, high2, b21 : bool, x21, high2),
  WF_R2(b2 : bool, x2, high2, b22 : bool, x22, high2) :-
    Inv(d, b, b1 : bool, x1, high1, b2 : bool, x2, high2),
    !b1, b2 and b21 and x21 = x2 + 1 and !b22 and x22 = x2.
  \end{alltt}
  \item The TS-GNI verification problem \verb|SimpleTS_GNI2| of the following program:
  \begin{alltt}
    x = high; while ( * ) \{ x ++; \}; return x;
  \end{alltt}
  The encoded constraints are:
  \begin{alltt}
  Inv(0, b, b1 : bool, x1, high1, b2 : bool, x2, high2) :-
    FN_DB(x1, high1, x2, high2, b), b1, b2, x1 = high1, x2 = high2.
  Inv(d', b, b1' : bool, x1', high1, b2 : bool, x2, high2) :-
    Inv(d, b, b1 : bool, x1, high1, b2 : bool, x2, high2),
    SchTF(d, b, b1 : bool, x1, high1, b2 : bool, x2, high2),
    b1 and (b1' and x1' = x1 + 1 or !b1' and x1' = x1) or
    !b1 and !b1' and x1' = x1,
    (!b1 or !b2 or d' = d + 1).
  Inv(d', b, b1 : bool, x1, high1, b21 : bool, x21, high2),
  Inv(d', b, b1 : bool, x1, high1, b22 : bool, x22, high2) :-
    Inv(d, b, b1 : bool, x1, high1, b2 : bool, x2, high2),
    SchFT(d, b, b1 : bool, x1, high1, b2 : bool, x2, high2),
    b2 and b21 and x21 = x2 + 1 and !b22 and x22 = x2 or
    !b2 and !b21 and x21 = x2 and !b22 and x22 = x1,
    (!b1 or !b2 or d' = d - 1).
  Inv(d, b, b1' : bool, x1', high1, b21 : bool, x21, high2),
  Inv(d, b, b1' : bool, x1', high1, b22 : bool, x22, high2) :-
    Inv(d, b, b1 : bool, x1, high1, b2 : bool, x2, high2),
    SchTT(d, b, b1 : bool, x1, high1, b2 : bool, x2, high2),
    b1 and (b1' and x1' = x1 + 1 or !b1' and x1' = x1) or
    !b1 and !b1' and x1' = x1,
    b2 and b21 and x21 = x2 + 1 and !b22 and x22 = x2 or
    !b2 and !b21 and x21 = x2 and !b22 and x22 = x1.
  
  b1 :-
    Inv(d, b, b1 : bool, x1, high1, b2 : bool, x2, high2),
    SchTF(d, b, b1 : bool, x1, high1, b2 : bool, x2, high2),
    b2.
  b2 :-
    Inv(d, b, b1 : bool, x1, high1, b2 : bool, x2, high2),
    SchFT(d, b, b1 : bool, x1, high1, b2 : bool, x2, high2),
    b1.
  SchTF(d, b, b1 : bool, x1, high1, b2 : bool, x2, high2),
  SchFT(d, b, b1 : bool, x1, high1, b2 : bool, x2, high2),
  SchTT(d, b, b1 : bool, x1, high1, b2 : bool, x2, high2) :-
    Inv(d, b, b1 : bool, x1, high1, b2 : bool, x2, high2), b1 or b2.
  -b <= d and d <= b and b >= 0 :-
    Inv(d, b, b1 : bool, x1, high1, b2 : bool, x2, high2), b1, b2.
  
  x1 = x2 :- Inv(d, b, b1 : bool, x1, high1, b2 : bool, x2, high2), !b1, !b2.
  
  WF_R2(b2 : bool, x2, high2, b21 : bool, x21, high2),
  WF_R2(b2 : bool, x2, high2, b22 : bool, x22, high2) :-
    Inv(d, b, b1 : bool, x1, high1, b2 : bool, x2, high2),
    !b1, b2 and b21 and x21 = x2 + 1 and !b22 and x22 = x2.
  \end{alltt}
  \item The TS-GNI verification problem \verb|InfBranchTS_GNI| of the following program:
  \begin{alltt}
    if (high) \{
      while (x>0) \{ x = x - max( * , 1); \}
    \} else \{
      while (x>0) \{ x = x - 1; \}
    \}
  \end{alltt}
  The encoded constraints are:
  \begin{alltt}
    Inv(0, b, high1 : bool, x1, high2 : bool, x2) :-
      FN_DB(high1 : bool, x1, high2 : bool, x2, b), x1 = x2.
    Inv(d', b, high1 : bool, x1', high2 : bool, x2) :-
      Inv(d, b, high1 : bool, x1, high2 : bool, x2),
      SchTF(d, b, high1 : bool, x1, high2 : bool, x2),
      high1 and x1 > 0 and FN_R(high1 : bool, x1, nd) and
      (nd >= 1 and x1' = x1 - nd or x1' = x1 - 1) and d' = d + 1 or
      !high1 and x1 > 0 and x1' = x1 - 1 or
      x1 <= 0 and x1' = x1,
      (x1 <= 0 or x2 <= 0 or d' = d + 1).
    Inv(d', b, high1 : bool, x1, high2 : bool, x2') :-
      Inv(d, b, high1 : bool, x1, high2 : bool, x2),
      SchFT(d, b, high1 : bool, x1, high2 : bool, x2),
      high2 and x2 > 0 and nd >= 1 and x2' = x2 - nd or
      !high2 and x2 > 0 and x2' = x2 - 1 or
      x2 <= 0 and x2' = x2,
      (x1 <= 0 or x2 <= 0 or d' = d - 1).
    Inv(d, b, high1 : bool, x1', high2 : bool, x2') :-
      Inv(d, b, high1 : bool, x1, high2 : bool, x2),
      SchTT(d, b, high1 : bool, x1, high2 : bool, x2),
      high1 and x1 > 0 and FN_R(high1 : bool, x1, nd) and
      (nd >= 1 and x1' = x1 - nd or x1' = x1 - 1) or
      !high1 and x1 > 0 and x1' = x1 - 1 or
      x1 <= 0 and x1' = x1,
      high2 and x2 > 0 and nd >= 1 and x2' = x2 - nd or
      !high2 and x2 > 0 and x2' = x2 - 1 or
      x2 <= 0 and x2' = x2.
    
    x1 > 0 :-
      Inv(d, b, high1 : bool, x1, high2 : bool, x2),
      SchTF(d, b, high1 : bool, x1, high2 : bool, x2),
      x2 > 0.
    x2 > 0 :-
      Inv(d, b, high1 : bool, x1, high2 : bool, x2),
      SchFT(d, b, high1 : bool, x1, high2 : bool, x2),
      x1 > 0.
    SchTF(d, b, high1 : bool, x1, high2 : bool, x2),
    SchFT(d, b, high1 : bool, x1, high2 : bool, x2),
    SchTT(d, b, high1 : bool, x1, high2 : bool, x2) :-
      Inv(d, b, high1 : bool, x1, high2 : bool, x2), x1 > 0 or x2 > 0.
    -b <= d and d <= b and b >= 0 :-
      Inv(d, b, high1 : bool, x1, high2 : bool, x2), x1 > 0, x2 > 0.
    
    top :- Inv(d, b, high1 : bool, x1, high2 : bool, x2), x1 <= 0, x2 <= 0.
    
    WF_R1(high1 : bool, x1, high1 : bool, x1') :-
      Inv(d, b, high1 : bool, x1, high2 : bool, x2),
      x2 <= 0, 
      high1 and x1 > 0 and FN_R(high1 : bool, x1, nd) and
      (nd >= 1 and x1' = x1 - nd or x1' = x1 - 1) or
      !high1 and x1 > 0 and x1' = x1 - 1.
  \end{alltt}
  \item The TI-GNI verification problem \verb|TI_GNI_h**| of the following program:
  \begin{alltt}
    if (high) \{
      x = *;
      if (x >= low) \{ return x; \} else \{ return low; \}
    \} else \{
      x = low;
      while ( * ) \{ x++; \}
      return x;
    \}
  \end{alltt}
  The encoded constraints are:
  \begin{alltt}
Inv(pr (* prophecy variable for the return value of Copy 1 *),
    b1 : bool, x1, high1 : bool, low1, b2 : bool, x2, high2 : bool, low2) :-
  b1, b2, low1 = low2,
  high1 and x1 = nd1 or
  !high1 and x1 = low1,
  high2 and FN_R(pr, high2 : bool, low2, x2) or
  !high2 and x2 = low2.
Inv(pr, b1' : bool, x1', high1 : bool, low1, b2 : bool, x2, high2 : bool, low2) :-
  Inv(pr, b1 : bool, x1, high1 : bool, low1, b2 : bool, x2, high2 : bool, low2),
  SchTF(pr, b1 : bool, x1, high1 : bool, low1, b2 : bool, x2, high2 : bool, low2),
  high1 and b1 and (x1 >= low1 and !b1' and x1' = x1 or
                    x1 < low1 and !b1' and x1' = low1) or
  !high1 and b1 and (b1' and x1' = x1 + 1 or
                     !b1' and x1' = x1) or
  !b1 and !b1' and x1' = x1.
Inv(pr, b1 : bool, x1, high1 : bool, low1, b21 : bool, x21, high2 : bool, low2),
Inv(pr, b1 : bool, x1, high1 : bool, low1, b22 : bool, x22, high2 : bool, low2) :-
  Inv(pr, b1 : bool, x1, high1 : bool, low1, b2 : bool, x2, high2 : bool, low2),
  SchFT(pr, b1 : bool, x1, high1 : bool, low1, b2 : bool, x2, high2 : bool, low2),
  high2 and b2 and (x2 >= low2 and !b21 and x21 = x2 and
                                   !b22 and x22 = x2 or
                    x2 < low2 and !b21 and x21 = low2 and
                                  !b22 and x22 = low2) or
  !high2 and b2 and b21 and x21 = x2 + 1 and
                    !b22 and x22 = x2 or
  !b2 and !b21 and x21 = x2 and
          !b22 and x22 = x2.
Inv(pr, b1' : bool, x1', high1 : bool, low1, b21 : bool, x21, high2 : bool, low2),
Inv(pr, b1' : bool, x1', high1 : bool, low1, b22 : bool, x22, high2 : bool, low2) :-
  Inv(pr, b1 : bool, x1, high1 : bool, low1, b2 : bool, x2, high2 : bool, low2),
  SchTT(pr, b1 : bool, x1, high1 : bool, low1, b2 : bool, x2, high2 : bool, low2),
  high1 and b1 and (x1 >= low1 and !b1' and x1' = x1 or
                    x1 < low1 and !b1' and x1' = low1) or
  !high1 and b1 and (b1' and x1' = x1 + 1 or
                     !b1' and x1' = x1) or
  !b1 and !b1' and x1' = x1,
  high2 and b2 and (x2 >= low2 and !b21 and x21 = x2 and
                                   !b22 and x22 = x2 or
                    x2 < low2 and !b21 and x21 = low2 and
                                  !b22 and x22 = low2) or
  !high2 and b2 and b21 and x21 = x2 + 1 and
                    !b22 and x22 = x2 or
  !b2 and !b21 and x21 = x2 and
          !b22 and x22 = x2.

b1 or pr <> x1 :-
  Inv(pr, b1 : bool, x1, high1 : bool, low1, b2 : bool, x2, high2 : bool, low2),
  SchTF(pr, b1 : bool, x1, high1 : bool, low1, b2 : bool, x2, high2 : bool, low2),
  b2.
b2 :-
  Inv(pr, b1 : bool, x1, high1 : bool, low1, b2 : bool, x2, high2 : bool, low2),
  SchFT(pr, b1 : bool, x1, high1 : bool, low1, b2 : bool, x2, high2 : bool, low2),
  b1 or pr <> x1.
SchTF(pr, b1 : bool, x1, high1 : bool, low1, b2 : bool, x2, high2 : bool, low2),
SchFT(pr, b1 : bool, x1, high1 : bool, low1, b2 : bool, x2, high2 : bool, low2),
SchTT(pr, b1 : bool, x1, high1 : bool, low1, b2 : bool, x2, high2 : bool, low2) :-
  Inv(pr, b1 : bool, x1, high1 : bool, low1, b2 : bool, x2, high2 : bool, low2),
  b1 or pr <> x1 or b2.

x1 = x2 :-
  Inv(pr, b1 : bool, x1, high1 : bool, low1, b2 : bool, x2, high2 : bool, low2),
  !b1 and pr = x1 (* if the prophecy is correct *), !b2.
  \end{alltt}
  In the experiment of \verb|TI_GNI_hFT|, we provided the following constraint as a hint:
  \begin{alltt}
  x1 >= low1 :- Inv(pr, b1 : bool, x1, low1, b2 : bool, x2, low2).
  \end{alltt}
  Note that this is a part of necessary non-relational invariant.
\end{itemize}

}{}

\end{document}